\documentclass[a4paper,UKenglish,cleveref, autoref, thm-restate]{lipics-v2021}

\usepackage[algo2e,ruled,lined,linesnumbered,noend]{algorithm2e}

\usepackage{caption}
\usepackage{subcaption}

\usepackage{xcolor}
\def\draft{1}  
\if\draft 1

\newcommand{\guy}[1]{{\color{red} {\bf Guy:} #1}}
\else

\newcommand{\guy}[1]{}
\fi

\newcommand{\LIB}{log-interleave bound}
\newcommand{\LIBcaps}{Log-Interleave Bound}

\newcommand{\APMcaps}{Adaptive Parallel Mergesort}
\newcommand{\IB}{interleave bound} 
\newcommand{\IBcaps}{Interleave Bound}

\newcommand{\ib}{\text{IB}}
\newcommand{\lib}{\text{LIB}}

\newcommand{\MoP}{measure of disorder}
\newcommand{\MoPs}{measures of disorder}







\bibliographystyle{plainurl}

\title{The Geometry of Tree-Based Sorting} 


\author{Guy E. Blelloch}{Carnegie Mellon University, USA}{guyb@cs.cmu.edu}{}{}

\author{Magdalen Dobson}{Carnegie Mellon University, USA}{mrdobson@cs.cmu.edu}{}{}

\authorrunning{G. Blelloch and M. Dobson} 

\Copyright{Guy E. Blelloch and Magdalen Dobson} 

\ccsdesc[500]{Theory of computation~Sorting and searching} 

\keywords{binary search trees, sorting, dynamic optimality, parallelism} 

\category{} 

\relatedversion{} 



\acknowledgements{We thank Kanat Tangwongsan for helpful discussions. We thank the anonymous reviewers for their useful comments. This work was supported by the National Science Foundation grants CCF-1901381, CCF-1910030, CCF-1919223, DGE1745016, and DGE2140739.}

\nolinenumbers 

\ArticleNo{22}

\begin{document}

\maketitle

\begin{abstract}
We study the connections between sorting and the binary search tree (BST) model, with an aim towards showing that the fields are connected more deeply than is currently appreciated. While any BST can be used to sort by inserting the keys one-by-one, this is a very limited relationship and importantly says nothing about parallel sorting.   We show what we believe to be the first formal relationship between the BST model and sorting. Namely, we show that a large class of sorting algorithms, which includes mergesort, quicksort, insertion sort, and almost every instance-optimal sorting algorithm, are equivalent in cost to offline BST algorithms. Our main theoretical tool is the geometric interpretation of the BST model introduced by Demaine et al.~\cite{demaine2009bst}, which finds an equivalence between searches on a BST and point sets in the plane satisfying a certain property. To give an example of the utility of our approach, we introduce the \LIB{}, a measure of the information-theoretic complexity of a permutation $\pi$, which is within a $\lg \lg n$ multiplicative factor of a known lower bound in the BST model; we also devise a parallel sorting algorithm with polylogarithmic span that sorts a permutation $\pi$ using comparisons proportional to its \LIB. Our aforementioned result on sorting and offline BST algorithms can be used to show existence of an offline BST algorithm whose cost is within a constant factor of the \LIB{} of any permutation $\pi$. 
\end{abstract}

\section{Introduction}
\label{sec: intro}

Comparison-based sorting and searching on a BST are among the most elementary, important, and well-studied algorithmic topics in all of theoretical computer science. It has long been observed that they are closely related: both enjoy better performance on sequences that are information-theoretically simpler, such as reversals of sorted lists, sequences with long runs of consecutive keys, or sequences composed from simple shuffles of sorted lists. Indeed, their respective searches for instance optimality have yielded the independent discovery of almost identical results~\cite{petersson1995adaptive}. Despite the extensive number of similar results throughout the literature, there is comparatively very little known about the \textit{formal} relationship between sorting and the binary search tree (BST) model. In this paper, we present what we believe to be the first formal relation between the sorting cost model and the BST model. Our result shows that a large class of sorting algorithms, which includes mergesort, quicksort, insertion sort, and most adaptive sorting algorithms, are equivalent in cost to offline algorithms in the BST model. As this class is large and contains many well-studied algorithms, we find it interesting that we are able to show any kind of new and nontrivial theoretical result based on this characterization.

\textbf{Binary Search Trees.} The binary search tree is a fundamental data structure that stores an ordered universe of keys in a dynamic tree. A search for a key begins with a pointer to the root of the tree and at each step performs one of two unit-cost actions: move the pointer to a parent or child node, or perform a rotation. Rotations are key to understanding the power of the model, because they allow frequently queried elements to be kept close to the root of the tree as well as exploiting other kinds of order in the sequence of queried keys. The power of rotations is related to the concept of \textit{instance optimality}, a general term which refers to the fact that an algorithm can enjoy improvements on its worst-case complexity on well-defined sets of ``easy'' inputs. Many BST algorithms perform better than their worst-case complexity (i.e. $\log n$ per operation) on various kinds of input~\cite{wang2006competitive, demaine2007tango, demaine2009bst, badoiu2007unified}. Beyond these specific improvements, the hope for a more general kind of instance optimality is crisply expressed by the \textit{dynamic optimality conjecture} of Sleator and Tarjan~\cite{sleator1985self}, which states that there exists a binary search tree whose performance on any online sequence of searches is constant factor competitive with the best offline algorithm for that sequence.
The dynamic optimality conjecture remains open.
Another equally important open question is whether there is an offline efficient algorithm for calculating
even an approximately optimal number of rotations for a given input sequence.
These problems have been the subject of extensive work both in the past~\cite{wilber1989bounds, cole2000dyn_pt1, cole2000dyn_pt2, demaine2007tango, derryberry2009thesis, bose2014lazy} and at the present moment~\cite{lecomte2020wilber, chalermsook2018multi, bose2020competitive, kozma2018smooth, iacono2016weighted}.
 
\textbf{Sorting.} Similarly to the BST model, where rotations are used to achieve instance optimality for particular classes of inputs, in comparison-based sorting an \textit{adaptive sorting algorithm} performs fewer comparisons when the input is ``closer'' to sorted by some measure. In this field a \textit{\MoP{}} for a list $L$ is paired with an algorithm which is \textit{optimal} for this measure. Here, optimal roughly means that sorting $L$ only requires the number of comparisons needed to distinguish it from all other lists which are more presorted than $L$~\cite{petersson1995adaptive}. 
An accompanying notion is that a \MoP{} may be \textit{superior} (inferior) to another measure---that is, always requires fewer comparisons for any given permutation. Mannila first formalized these ideas~\cite{mannila1985presorted}. After this, many researchers devised new \MoPs{} and corresponding optimal algorithms~\cite{cook1980sort, castro1989sort, katajainen1989insertion, levcopoulos1990shuffled, levcopoulos1991splitsort, levcopoulos1993heapsort, moffat1990hist, petersson1995adaptive}; furthermore, there was also interest in work-optimal parallel versions of optimal sorting algorithms~\cite{Carlsson1991adaptive, levcopoulos1996inversions, chen1992improved}. It remains an open problem whether there exists a measure which is provably superior to any possible \MoP.

\textbf{Arborally Satisfied Point Sets and Sorting.} One of our most important tools in connecting BSTs and sorting is the geometric interpretation of BSTs~\cite{demaine2009bst, DSW05}. In this interpretation, an access sequence of $n$ keys is represented as an $n \times n$ grid with time order (input order) on one axis (here the $x$ axis) and key order (output order) on the other axis. Points are added to the grid to account for all keys that must be visited when searching or inserting the keys one at a time from left to right. Demaine et al.~\cite{demaine2009bst}, and Derryberry,  Sleator, and Wang~\cite{DSW05} show that for any BST algorithm, the accesses plotted in the plane must satisfy the property that for every pair of points $p, q$ (both original and added points), there is a monotonic path (e.g. up and right) from $p$ to $q$ consisting of horizontal and vertical segments with a point at each corner\footnote{The papers have their own preferred definition, but the definitions are all equivalent.}.  Demaine et al. refer to such a set of points as being \emph{arborally satisfied}, and show that any such set of size $m$ implies the sequence of $n$ keys can be searched or inserted in cost $O(m)$ in the BST model. 

We observe that the geometric approach is also useful for sorting, since unlike the BST model, it does not directly enforce an order of insertion. As some evidence of the utility of the geometric approach, consider the following two algorithms that can be used to arborally satisfy a set of accesses. The first algorithm starts by choosing a random point, then adding accesses to that point across its entire row of the point set. Then, it recurses above and below the row, and in each partition it picks a random point and adds new points to all locations in its row
for which there is a key in the partition.  It is not hard to verify the points added in this way are arborally satisfied:  any point $p$ can
get to a point $q$ by going up (or down) to the row that separated them, then across to column of $q$ and up (down) to $q$.
Second, consider another algorithm that adds points across the middle column, and then for the left and right, add points along their middle columns for all points in those halves. Recursing to the base case again gives an arborally satisfied set. See Figure~\ref{fig: arboralSatisfaction} for an example of both of these algorithms. The attentive reader may have noticed that the accesses added in the first algorithm correspond to the comparisons made by the quicksort algorithm, and the accesses added in the second algorithm correspond to comparisons made by the mergesort algorithm.  We will extend these ideas to more interesting algorithms in this paper.

\begin{figure}
	\centering
	\begin{subfigure}[t]{.3\textwidth}
		\centering
		\includegraphics[width=\textwidth]{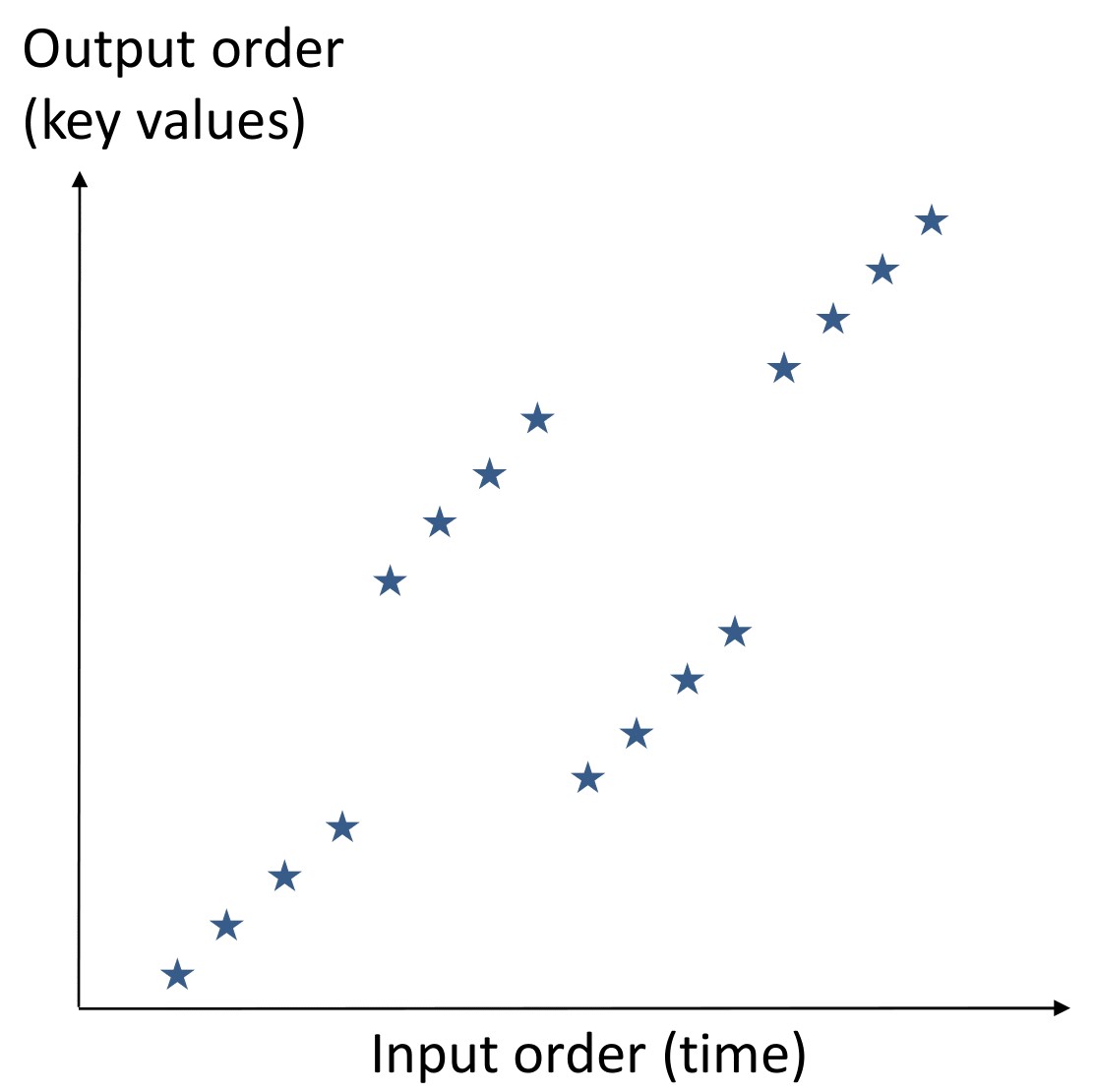}
		\caption{The input sequence}
		\label{fig: input}
	\end{subfigure}
	\hfill
	\begin{subfigure}[t]{.3\textwidth}
		\centering
		\includegraphics[width=\textwidth]{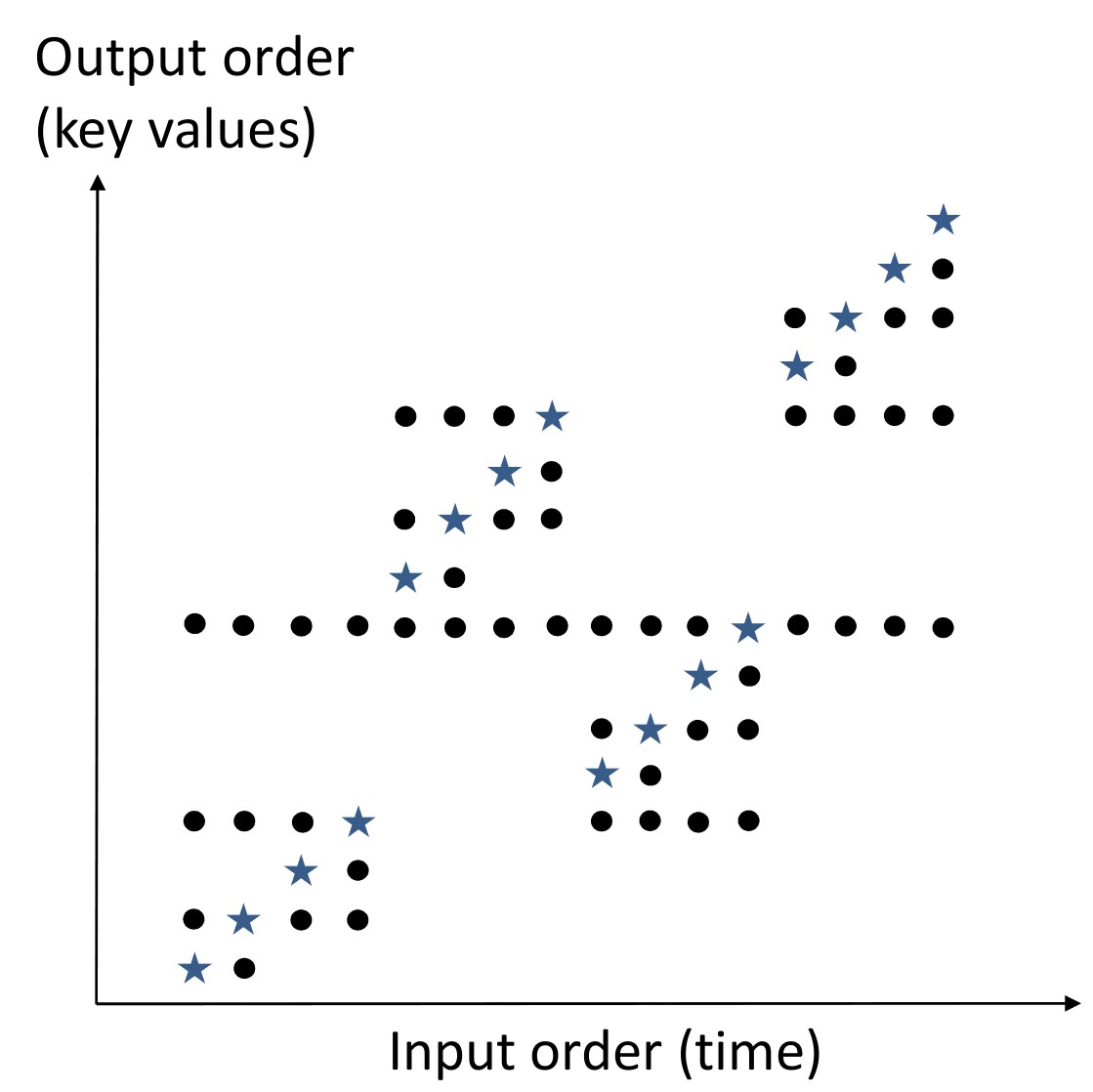}
		\caption{Sequence arborally satisfied using quicksort}
		\label{fig: quicksort}
	\end{subfigure}
	\hfill
	\begin{subfigure}[t]{.3\textwidth}
		\centering
		\includegraphics[width=\textwidth]{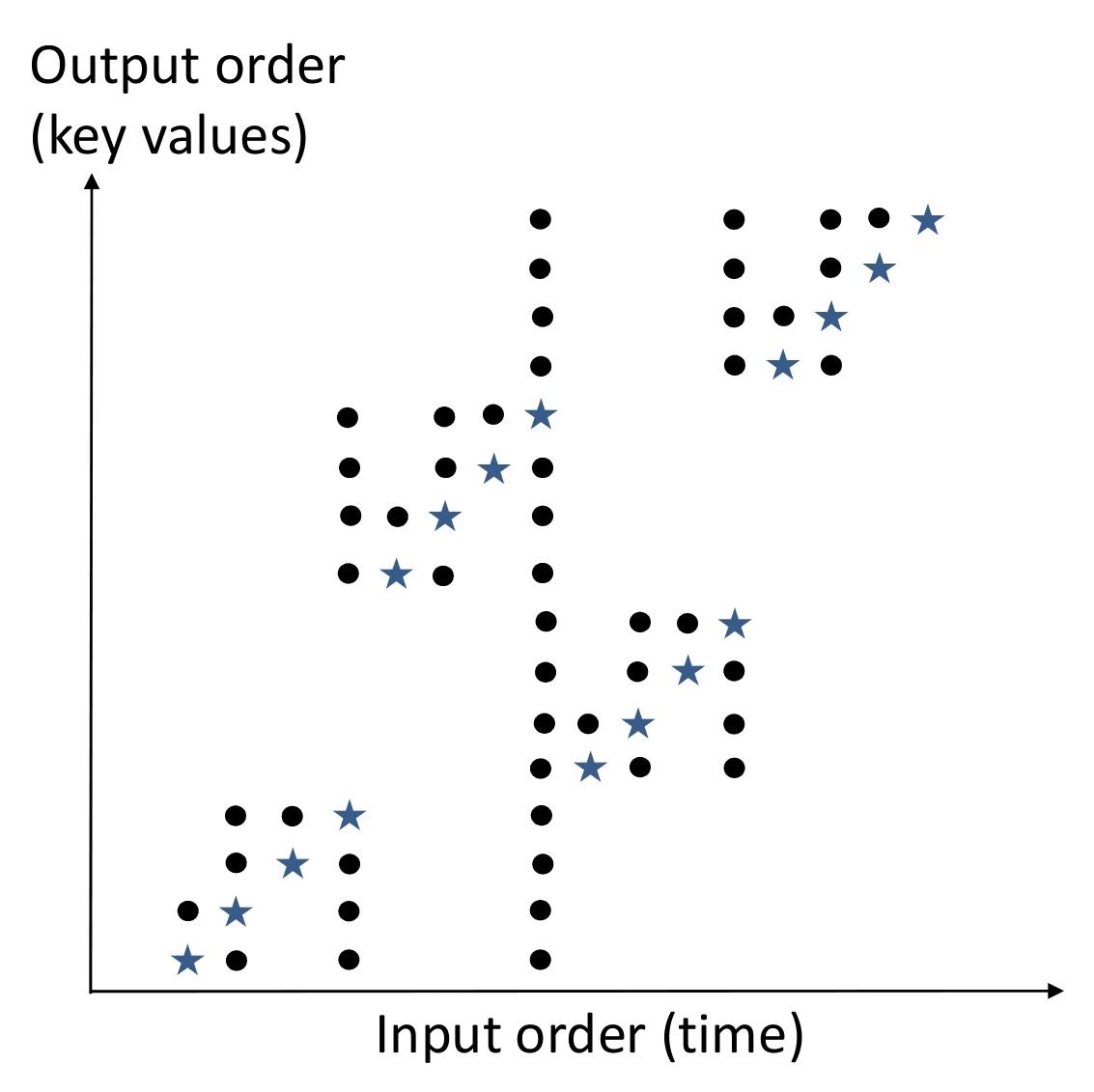}
		\caption{Sequence arborally satisfied using mergesort}
		\label{fig: mergesort}
	\end{subfigure}
	\caption{On the left, the input sequence plotted in the plane. In the middle, the input sequence arborally satisfied using accesses corresponding to quicksort. On the right, the input sequence arborally satisfied using accesses corresponding to mergesort.}
	\label{fig: arboralSatisfaction}
\end{figure}

We note that in addition to being of significant theoretical interest, taking advantage of locality in key sequences is widely practical for both search trees and sorting.   Sleator and Tarjan won the ACM Kannelakis Theory and Practice award for their work on splay trees and its applications for reducing key search time in several widely used applications.   Adaptive sorting algorithms are widely adopted in practice, including \textit{timsort}, which is implemented as built-in libraries for Python, Java, Swift, and Rust, among other languages~\cite{auger2018timsort}.

\subsection{Our Results.}

In this paper we present specific results relating sorting and BSTs using arborally satisfied sets. Our first result is an explicit relation between the cost models of a broad set of sorting algorithms and BST algorithms. The specific set of algorithms, which we refer to as \textit{tree-based} sorting algorithms and which are formally defined in Section~\ref{sec: sorting}, are divided into two classes: \textit{BST mergesorts} are sorting algorithms based on recursive merges of the input, where the keys being merged are stored in binary search trees; \textit{BT partition sorts} are sorting algorithms based on recursive partitions of the input based on key ordering, where the keys are stored in a binary tree. In Section~\ref{sec: sorting}, we show that these two types of algorithms are ``dual'' to each other in the following way: a BST mergesort $\mathcal{A}$ sorting permutation $\pi$ with cost $\mathcal{A}(\pi)$ implies the existence of binary tree (BT) partition sort $\mathcal{B}$ sorting $\pi^{-1}$ with cost $\mathcal{A}(\pi)$, and vice versa.

The category of tree-based sorting algorithms is large. Both quicksort and mergesort on lists fall into the category of tree-based sorting algorithms, as lists are simply a special case of trees. Sorting by insertion into a BST is also a BST mergesort, since the merges can be carried out in any order. McIlroy's adaptive sorting algorithms---namely, insertion sort with exponential search and mergesort with exponential search---are BST mergesorts~\cite{mcilroy1993sorting}. These algorithms were influential in the development of timsort~\cite{auger2018timsort}, a mergesort algorithm that breaks the input into runs of increasing or decreasing keys and merges them based on certain ordering criteria; since it is also a mergesort on lists, timsort is a BST merge. In their capstone paper on adaptive sorting, Petersson and Moffat cover the three most powerful known adaptive sorting algorithms---local insertion sort, historical insertion sort, and regional insertion sort. Local insertion sort is a BST mergesort as it inserts into a BST with a single additional pointer, which can be converted to the BST model with constant overhead~\cite{chalermsook2018multi}. Historical and regional insertion sort are not BST mergesorts as presented by the authors, but independently discovered data structures would yield BST mergesorts with the same bounds~\cite{sleator1985self, derryberry2009thesis}.

Theorem~\ref{thm: dualoffline} shows that for any tree-based sorting algorithm that sorts access sequence $\pi$ using $\mathcal{A}(\pi)$ accesses, there exists an offline algorithm in the BST model which searches for each key in $\pi$ using $O(\mathcal{A}(\pi))$ accesses. The proof of Theorem~\ref{thm: dualoffline} is subtle and nontrivial and requires several new insights relating to the geometric interpretation of the BST model. In the following statement, $\text{OPT}_{\text{BST}}(\pi)$ refers to the cost of the best offline algorithm.

\begin{theorem}\label{thm: dualoffline}
Let $\mathcal{A}$ be a tree-based sorting algorithm which sorts permutation $\pi$ using $\mathcal{A}(\pi)$ accesses. Then  $\text{OPT}_{\text{BST}}(\pi) \in O(\mathcal{A}(\pi))$.
\end{theorem}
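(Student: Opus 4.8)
The plan is to route everything through the geometric interpretation of the BST model. By Demaine et al.~\cite{demaine2009bst}, $\text{OPT}_{\text{BST}}(\pi)$ equals, up to constant factors, the minimum size of an arborally satisfied point set containing the input points $P_\pi = \{(i,\pi(i)) : i \in [n]\}$; and, conversely, any arborally satisfied superset of $P_\pi$ of size $m$ yields an offline BST algorithm of cost $O(m)$. So it suffices, given a tree-based sorting algorithm $\mathcal{A}$ and its execution on $\pi$, to exhibit an arborally satisfied set $S \supseteq P_\pi$ with $|S| = O(\mathcal{A}(\pi))$. Two preliminary reductions simplify the task. First, arboral satisfaction is invariant under transposing the grid (swapping the time and key axes), and transposing $P_\pi$ gives $P_{\pi^{-1}}$, so the geometric characterization also yields $\text{OPT}_{\text{BST}}(\pi) = \Theta(\text{OPT}_{\text{BST}}(\pi^{-1}))$. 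Second, by the duality between BST mergesorts on $\pi$ and BT partition sorts on $\pi^{-1}$ established in Section~\ref{sec: sorting}, combining these it is enough to handle a single one of the two classes; I would take BST mergesorts.

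Next I would set up the recursion structure. Using the definition of tree-based sorting in Section~\ref{sec: sorting}, a BST mergesort on $\pi$ is captured by a binary \emph{merge tree} whose leaves are the initial runs and whose internal nodes merge their two children, and I would arrange that each merge node corresponds to a pair of horizontally adjacent vertical strips of the grid whose merge ``reveals'' how their points interleave along the key axis. The set $S$ is then built top-down over the merge tree: at each merge node we add points on and immediately beside the vertical line separating the two strips, but only at the key-levels the algorithm actually touches. These are of two kinds — \emph{structural} accesses at the key-levels where the interleaving switches from one child to the other, and \emph{search} accesses generated when the algorithm locates a key of one child inside a maximal run of consecutive keys of the other child (a binary search through the BST, or the analogous traversal in the tree-based model). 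Structural points are $O(1)$ per access and are placed so that consecutive separating-line points together with run endpoints form a staircase; the search points inside a run of length $k$ I would place so as to reproduce, within that sub-strip, the point pattern of a balanced mergesort, which the toy argument from the introduction already certifies to be arborally satisfied.

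The main obstacle is proving that $S$ is arborally satisfied, which is exactly where the new insights are needed. In a non-adaptive mergesort every key-level of every separating line carries a point, so the monotone-path condition is immediate; an adaptive algorithm leaves most key-levels empty, and one must still produce, for every pair of points of $S$ — original/original, original/added, added/added, possibly in strips from incomparable merge nodes — a monotone staircase through points of $S$. I expect the proof to go by induction on the merge tree: assuming each child's strip is internally arborally satisfied, show that inserting the separating-line points (structural and search) simultaneously (i) repairs every violation newly created between the two children, and (ii) creates no violation with the rest of the grid, leveraging the fact that a merge node's strip nests inside its parent's. A second, quantitative obstacle is showing $|S| = O(\mathcal{A}(\pi))$ rather than $O(\mathcal{A}(\pi)\log n)$: the search points inside a run must be charged to the $\Theta(\log k)$ accesses the algorithm genuinely spends there, and these charges must be disjoint across merge nodes. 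Once $S$ is constructed and both properties are verified, applying the Demaine et al.~transformation to $S$ and undoing the transpose/duality reduction completes the proof.
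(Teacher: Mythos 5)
Your high-level plan matches the paper's: route through arborally satisfied sets via Demaine et al.'s equivalence, dispose of partition sorts by the transpose/duality to BST mergesorts on $\pi^{-1}$, and then build a satisfied superset $S$ by induction over the merge tree, one merge at a time. The rotation-invariance observation and the duality between BST mergesort on $\pi^{-1}$ and BT partition sort on $\pi$ (Lemma~\ref{lem: btpartition}) are exactly as in the paper. But the heart of the argument is missing, and the proposed construction detail for the search accesses is off.

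First, the construction. You place new points ``on and immediately beside'' the separating vertical line, at the key-levels the merge touches. The paper's construction instead places the accessed keys (a top tree $\tau$ of the merged BST, whose size is by definition the cost of the merge) in \emph{three} columns: the leftmost column of the combined set, the middle column (rightmost column of the left child), and the rightmost column of the combined set. The middle column is what repairs rectangles between the two children; the outer two columns are not optional — they are what restores the invariant needed for the next merge up the tree. Your plan drops exactly those two columns. Also, the proposal to ``reproduce the point pattern of a balanced mergesort'' inside each run is neither what the paper does nor quantitatively right: a balanced-mergesort pattern on a run of length $k$ has $\Theta(k \log k)$ points, whereas the budget the adaptive merge gives you for that run is $O(\log k)$. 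In the paper the accesses you may use are simply the $|\tau|$ keys the BST merge already touched, placed down the columns; there is no extra pattern to generate and no charging argument is needed beyond a constant factor of three.

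Second, and more fundamentally, your inductive hypothesis — ``each child's strip is internally arborally satisfied'' — is too weak to push through step (ii), that the newly added column creates no violations against pre-existing points in the children. Arboral satisfaction alone tells you nothing about where the points of $A$ sit relative to the keys the merge accesses, so there is no reason a new point at key $k$ in the middle column cannot form an empty rectangle with some interior point of $A$. The paper's fix is an additional structural invariant, \emph{double congruence}: the set $A$ is both left- and right-congruent with the tree $T_A$ the mergesort has built, in the sense that the left (right) multi-treap of $A$, formed by prioritizing rows by distance to the left (right) boundary, can be realized as $T_A$. With this invariant, Observation~\ref{observation} gives precisely the fact you need: accessing a top tree of $T_A$ along (or one past) $A$'s left or right column keeps the set satisfied. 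The induction then must prove both arboral satisfaction \emph{and} double congruence of the merged set with the merged tree; proving the latter is where the outer two columns are used. Your proposal identifies the induction but not the invariant, and without it step (ii) is a genuine gap, not a routine verification.
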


As some evidence for the utility of our approach, we introduce the \LIB{}, a measure of the information-theoretic complexity of a permutation $\pi$. The \LIB{} is an upper bound on the number of bits needed to encode $\pi$; it can also be understood from an algorithmic perspective as a mergesort with a more efficient merge step. Our main results on the \LIB{} illustrate the connections between sorting and the BST model. In the statements of the following results, we use the notation $\lib(\pi)$ to refer to the \LIB{} of a permutation $\pi$. This will be defined formally in Section~\ref{sec: log-interleave}.

The first result is a proof that the \LIB{} is within a $\lg \lg n$ multiplicative factor of the optimal offline BST algorithm on any permutation. Somewhat similarly to Demaine et al.'s proof of the closeness to optimality of tango trees~\cite{demaine2007tango}, our proof shows closeness to optimality by comparing the \LIB{} with Wilber's \IB~\cite{wilber1989bounds}, a lower bound in the BST model. 

\begin{restatable*}{theorem}{thmlib}\label{thm: lib}
	For any permutation $\pi$, $\ib(\pi) \leq \lib(\pi) \in O(\lg \lg n \; \ib(\pi))$.
\end{restatable*} 

Next, we show that there is a work optimal \textit{parallel} sorting algorithm related to the \LIB. The next result is a parallel mergesort featuring a merge step which combines recent work on parallel split and join of BSTs~\cite{blelloch2016just} with a BST from~\cite{brown1979fast} and an analysis which shows that with this new merge step, the mergesort sorts a sequence $\pi$ in $O(\lib(\pi))$ work. While the span of this algorithm is greater than the span of a typical parallel sort and indeed may be open to improvement, all existing parallel sorting algorithms present guarantees only for very weak \MoPs~\cite{chen1992improved, levcopoulos1996inversions}. 

\begin{restatable*}{theorem}{thmsort}\label{thm: mergesort}
There exists a parallel mergesort which for any permutation $\pi$  performs $O(\lib(\pi))$ work with polylogarithmic span.
\end{restatable*}

Finally, a corollary of Theorem~\ref{thm: offlinealg} shows that there is an offline BST algorithm that incurs cost $O(\lib(\pi))$, and thus that the \LIB{} is an upper bound in the BST model.

\begin{restatable*}{corollary}{corlib}\label{cor: offlinelib}
There exists an offline BST algorithm $\mathcal{A}$ such that $\mathcal{A}(\pi) = O(\lib(\pi))$.
\end{restatable*}

\textbf{Model of Computation.} Our results for the parallel algorithms are given for the binary-fork-join model~\cite{BlellochF0020}.  
In this model a process can fork two child processes, which work in parallel and when both complete, the parent process continues.   Costs are measured in terms of the work (total number of instructions across all processes) and span (longest dependence path among processes).  Any algorithm in the binary forking model with $W$ work and $S$ span can be implemented on a CRCW PRAM with $P$ processors in $O(W/P + S)$ time with high probability~\cite{ABP01,blumofe1999scheduling}, so the results here are also valid on the PRAM, maintaining work efficiency.

\subsection{Related Work.}
\label{sec:related}

\textbf{Upper and Lower Bounds in the BST Model.} The pursuit of dynamic optimality led to a string of work in both upper and lower bounds on the cost of a sequence of searches on a BST. Three important upper bounds in the literature are the \textit{dynamic finger bound}~\cite{sleator1985self, cole2000dyn_pt1, cole2000dyn_pt2, chalermsook2018multi, bose2014lazy, iacono2016weighted}, the \textit{working set bound}~\cite{sleator1985self}, and the \textit{unified bound}~\cite{badoiu2007unified, derryberry2009thesis}, which respectively state that accessing an element is fast if its key is close to the key of the previous search, if its key has been searched recently, and a combination of the two. There has also been significant work in lower bounding the cost of an access sequence in the BST model. Two such lower bounds, the \textit{interleave bound} and the\textit{ funnel bound}, were introduced by Wilber in~\cite{wilber1989bounds}; a recent work by Lecomte and Weinstein~\cite{lecomte2020wilber} affirmatively settled the 30-year open question of whether the funnel bound was tighter than the interleave bound, proving a $\lg \lg n$ multiplicative separation in some cases. Another lower bound, the \textit{rectangle bound}, was introduced by Demaine et al. in~\cite{demaine2009bst}.

\textbf{Progress on Dynamic Optimality.} The BST which comes closest to dynamic optimality is the \textit{tango tree} of Demaine et al.~\cite{demaine2007tango}, which has a competitive ratio of $O(\lg \lg n)$ with respect to the best offline algorithm. Wilber's interleave bound was vital in the analysis of the competitive ratio, since the authors showed that on any access sequence $x$, the tango tree uses $O(\lg \lg n \; \ib(x))$ accesses, where $\ib(x)$ represents the interleave bound of the sequence. In~\cite{wang2006competitive}, Wang et al. introduce the multi-splay tree, a BST which achieves $O(\lg \lg n)$ optimality with better worst-case guarantees than the tango tree. Prominent candidates for a dynamically optimal algorithm include the splay tree, which was presented by Sleator and Tarjan at the same time as the dynamic optimality conjecture~\cite{sleator1985self}, and the Greedy algorithm presented in Demaine et al.'s geometric interpretation of the BST model~\cite{demaine2009bst}.

\textbf{Other Data Structures and the BST model.} The \textit{min-heap}, which stores a set of keys and supports inserting arbitrary elements and extracting and deleting the minimum element. Recently, Kozma and Saranurak show an explicit relation between the BST model and the heap cost model~\cite{kozma2018smooth}, as well as proposing an analogue of the dynamic optimality conjecture for heaps. Specifically, they show that for every heapsort algorithm (that is, an algorithm which sorts a permutation $\pi$ with cost $\mathcal{A}(\pi)$ by inserting its keys into a heap and repeatedly extracting the minimum element) corresponds to an insertion sort into a BST algorithm which incurs cost $\mathcal{A}(\pi)$ on the inverse permutation $\pi^{-1}$. Their insight came from relating the rotation operation in a BST to the link operation in a heap, which allowed them to relate the corresponding cost models.

\textbf{Adaptive Sorting in Parallel.} During the period of interest in the adaptive sorting model, researchers were also interested in work-optimal parallel sorting algorithms with polylogarithmic span. Unsurprisingly, such algorithms exist for practical measures Runs and Inv~\cite{Carlsson1991adaptive, chen1992improved}; one also exists for Osc, a generalization of Inv~\cite{levcopoulos1996inversions} that is still theoretically weak. To our knowledge there are no results on parallel sorting algorithms which are optimal with respect to any stronger measures.

\section{Preliminaries}\label{sec: prelims}
\textbf{Terminology.} Throughout this paper, we will use the terms \textbf{list}, \textbf{permutation}, and \textbf{access sequence} interchangeably to refer to some ordering of the keys $1, 2, \ldots, n$. The term access sequence is used in the literature on BSTs to denote a sequence of queries to a BST; unless otherwise stated, an access sequence is presumed not to contain repeated keys.

\textbf{The Binary Search Tree Model.} 
A binary tree (BT) is either a \emph{leaf} or a \emph{node} consisting of a left binary tree, a key and a right binary tree.  A binary seach tree (BST) is a binary tree where the keys have a total order, and for each node in the tree all keys in its left subtree are less than its key, and all keys in its right tree are greater.

The following definition of the BST model is drawn from~\cite{sleator1985self, wilber1989bounds, demaine2007tango}. 
The model assumes an initial BST with keys $[1,2,\ldots,n]$ and an access sequence
$[x_1, x_2, \ldots, x_m]$ of searches, where $x_i \in \{1, 2,\ldots,n\}$. 
Each search starts at the root and at each node it visits, it may perform one of the following actions: (a) move to the right child, left child, or parent, or (b) perform a rotation of the node and its parent. Each of these actions has unit cost and the search must
visit its specified key. 
We refer to an algorithm that decides on what actions to perform for each search as a \textbf{BST algorithm}.  A BST algorithm may be offline---meaning it can see the entire sequence of queries ahead of time---or online, meaning that queries are revealed one at a time.

\textbf{Wilber's \IBcaps.} Wilber's interleave bound is a lower bound on the cost of accessing any sequence in the BST model. Given an access sequence $\pi$ consisting of the keys $x_1, x_2, \ldots, x_n$, fix a static binary tree $P$ (meaning it will never be rotated) with the keys of $\pi$ at the leaves in the order they appear in $\pi$. Calculate the \IB{} of $\pi$ as follows: query the keys in $\pi$ in sorted order. For each vertex $v_j$, label each element $i$ of the sequence with R or L, depending on whether accessing $i$ in $P$ goes through the right or the left subtree of $v_j$, respectively (if $i$ is in neither subtree, give it no label). The \IB{} of $v_j$, denoted $\ib(v_j)$, is the number of switches between $R$ and $L$ in the labels if the keys are queried in sorted order. The \IB{} of the entire access sequence $\pi$ is calculated by summing over the \IB s of each vertex, so $\ib(\pi) = \sum_{v \in P} \ib(v_i)$. As the lower bound holds for an arbitrary tree $P$, the \IB{} of the sequence is usually understood to refer to the maximum over all static trees. See Figure~\ref{fig: IB} for an example calculation.

\textbf{Arborally Satisfied Sets.} In~\cite{derryberry2009thesis}, Derryberry et al. formalize a connection between binary search trees and points in the plane satisfying a certain property. An access sequence can be plotted in the plane where one axis represents key values and the other axis represents the ordering of the search sequence (that is, time). In the context of sorting, these axes can also be referred to as input order and output order. In this work, we use the horizontal axis for time and the vertical access for keyspace. See Figure~\ref{fig: arboralSatisfaction} for an example of an arborally satisfied set.

In addition to plotting the search sequence on the plane, one can also plot the key values of the nodes which a BST algorithm accesses (for search or rotations) while searching for a node. When searching for an element $x_i$ which is inserted at time $i$, the values of the nodes in the search path are plotted on the same vertical. Demaine et al.~\cite{demaine2009bst} proved that such a plot satisfies the following property:

\begin{definition}\label{def: arborallysatisfiedset}
Given a set $P$ of points in the plane, $P$ is \textbf{arborally satisfied} if for every two points $x,y \in P$ that are not on the same vertical or horizontal, the rectangle defined by $x$ and $y$ contains at least one point in addition to $x$ and $y$.
\end{definition}

As mentioned in Section~\ref{sec: intro}, a useful equivalent definition of arboral satisfaction is that there must be a monotonic path (e.g. consisting only of moves up and to the right) between $x$ and $y$ with a point at every corner. Note that any valid search on a BST will only touch nodes in a subtree $\tau_i$ of tree $T$, where $\tau_i$ includes the root of $T$. We sometimes refer to such a subtree as a \textbf{top tree} of $T$.

Demaine et al. show that a BST can be used to arborally satisfy an access sequence plotted in the plane. However, one can also use an algorithm that directly places points in the plane rather than using a BST.

\begin{definition}\label{def: arboralalg}
Given a set of points in the plane corresponding to an access sequence $\pi$, an \textbf{offline arboral satisfaction algorithm} adds points to the plane to make an arborally satisfied set. An \textbf{online arboral satisfaction algorithm} also adds points in the plane to form an arborally satisfied set, but accesses are revealed one by one in input order and the algorithm must produce an arborally satisfied set at each time.
\end{definition}

Demaine et al. show that a BST algorithm is equivalent to an arboral satisfaction algorithm, but they also show a more surprising result: an arboral satisfaction algorithm is equivalent to a BST algorithm. Specifically, they show that an offline (online) arboral satisfaction algorithm requiring $f(\pi)$ accesses to arborally satisfy a search sequence $\pi$ can be transformed to an offline (online) BST algorithm requiring $O(f(\pi))$ accesses to search for the elements of a sequence $\pi$.

\section{Tree-based Sorting}\label{sec: sorting}

Towards the goal of unifying the BST model and sorting, we ask the following question: when can the costs of a sorting algorithm be related to the costs of a BST algorithm? Clearly not every comparison-based sorting algorithm should be relatable to the BST model: as mentioned in Section~\ref{sec: intro}, for example, an algorithm that guesses and checks could take $O(n)$ steps for an arbitrary permutation.  Our investigation is therefore limited to sorting over binary trees, and in particular it considers a class of mergesort and partition sort (related to quicksort) algorithms on binary trees.

\subsection{Mergesort.}\label{sec: mergesorting}
We first consider mergesorts in which the sequences to be merged are represented as BSTs.
``Mergesort'' is interpreted here as merging based on any split of the input sequence, not just splits into equally-sized parts; thus insertion sort using a sequence of insertions into a BST is a special case of mergesort.  The cost of these algorithms is measured in terms of the number of accesses to the tree required during the mergesort, which will always be at least as great as the number of comparisons. We capture the idea of a BST mergesort more formally before giving the theorem statement.

A merge will interleave contiguous subsequences from its two inputs.  We refer to each of these subsequences as \emph{blocks} and we refer to the ends of each block as \emph{block boundaries}.  The block boundaries need to be accessed to even verify that the merge is correct.  The following defines a merge that examines some top part of two trees to generate its output.

\begin{definition}\label{def: bstmerge}
  A \textbf{BST merge} takes two BSTs $T_A$ and $T_B$, and for some top trees $\tau_a$ of $T_A$ and $\tau_b$ of $T_B$, returns a BST $T$ such that for some top tree $\tau$ of $T$,  $\tau =\tau_a \cup \tau_b$, the subtrees of $\tau$ correspond to unchanged subtrees of $\tau_a$ and $\tau_b$, and $\tau$ contains the block boundaries.  The \emph{number of accesses} used by the merge is $|\tau|$. 
\end{definition}

\begin{definition}\label{def: bstmergesort} A \textbf{BST mergesort} recursively splits the input sequence into two parts each of size at least 1, sorts each part with a BST mergesort, and executes a BST merge on the results.  A mergesort on an input of size 1 returns its input.  The \emph{number of accesses} used by the mergesort is the sum of accesses across all merges.  \end{definition}

This leads to the main theorem.
\begin{theorem}\label{thm: offlinealg}
	Let $\mathcal{A}$ be a BST mergesort algorithm which sorts permutation $\pi$ using $\mathcal{A}(\pi)$ accesses. Then  $\text{OPT}_{\text{BST}}(\pi) \in O(\mathcal{A}(\pi))$.
      \end{theorem}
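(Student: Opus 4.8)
The plan is to build, from an arbitrary BST mergesort $\mathcal{A}$, an offline \emph{arboral satisfaction algorithm} whose number of added points is $O(\mathcal{A}(\pi))$; by the transfer theorem of Demaine et al.\ (cited in the Preliminaries), this yields an offline BST algorithm of cost $O(\mathcal{A}(\pi))$, and hence $\text{OPT}_{\text{BST}}(\pi) \in O(\mathcal{A}(\pi))$. The key geometric picture is the one sketched in the introduction for ordinary mergesort: each merge of two already-sorted blocks is realized by adding points along a ``separating'' line, and recursion on the merge tree produces an arborally satisfied set. The task here is to show that this still works when the two merged sequences are stored as arbitrary BSTs and the merge only touches a top tree $\tau = \tau_a \cup \tau_b$ of the result, at total charge $|\tau|$.

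First I would set up coordinates: the $x$-axis is input order (the positions $1,\dots,n$ of $\pi$), the $y$-axis is key order, and each key $x_i$ is the point $(i,\pi^{-1}(i))$ — wait, more carefully, the point $(i, x_i)$ where $x_i$ is the key at position $i$. A merge in $\mathcal{A}$ operates on a contiguous segment of input positions that has been split into a left half $A$ and a right half $B$; in the $x$-axis this segment is an interval $I$, and $A$, $B$ partition $I$ into two sub-intervals. The merge interleaves blocks of $A$ and $B$; each block boundary is a key-value at which the merge switches from drawing from one side to the other. The recursive structure of the mergesort is a binary tree of merges (the merge tree), and I would process merges bottom-up. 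The \textbf{invariant} I would maintain is that after processing all merges in a subtree, the points lying in the corresponding $x$-interval, together with the points added so far, are arborally satisfied \emph{relative to that strip} — i.e.\ any two points in the strip have the monotone-staircase path between them.

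The heart of the argument is the \textbf{inductive step for a single merge}. Given that the $A$-strip and the $B$-strip are each internally arborally satisfied, I must add points to satisfy all rectangles with one corner in $A$ and the other in $B$. The natural move is to add points along the vertical line separating $A$ from $B$: for each key value that needs representation, place a point on that dividing column. But we are only allowed to charge $|\tau_a \cup \tau_b|$, the size of the top trees actually examined, not the full size of the merged output. Here is where Definition~\ref{def: bstmerge} does the work: the top tree $\tau$ of the output contains exactly the block boundaries, and the subtrees hanging off $\tau$ are \emph{unchanged} subtrees of $\tau_a,\tau_b$. So within any one of those hanging subtrees, a whole contiguous run of the output comes entirely from one side ($A$ or $B$) and was already arborally satisfied internally; no new cross-rectangles arise strictly inside it. Cross-rectangles can only be ``witnessed'' at block boundaries, and the number of those — plus the structure needed to route staircase paths through them — is $O(|\tau|)$. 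I would prove this by: (i) showing every cross-rectangle between a point $p\in A$ and $q\in B$ either already contains a point, or contains a block-boundary point we add on the separating column at the appropriate height; (ii) checking that these added boundary points are themselves mutually satisfied and satisfied against the old points, again using that $\tau_a,\tau_b$ were top trees (so the block boundaries form a coherent staircase); (iii) summing: the points added for this merge number $O(|\tau_a|+|\tau_b|)$, so the grand total telescopes to $O(\sum_{\text{merges}} |\tau|) = O(\mathcal{A}(\pi))$.

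The step I expect to be the main obstacle is (ii)–(iii) done \emph{honestly}: naively one adds a point per block boundary, but a block boundary on the $A$-side and the matching one on the $B$-side must be reconciled, and the staircase path from a deep point in an $A$-subtree to a deep point in a $B$-subtree must be routed up through $\tau_a$, across, and down through $\tau_b$ — which is exactly why the definition insists $\tau =\tau_a\cup\tau_b$ is a top tree of the \emph{output} and why the accesses are counted as $|\tau|$. Making the charging tight (constant factor, not $\log$ factor) requires arguing that no point is charged by more than $O(1)$ merges above it — the usual mergesort recursion-tree argument, but one must confirm it survives the unbalanced splits and the ``only the top tree is touched'' relaxation. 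I would also need the symmetric/dual observations for block boundaries that coincide or are adjacent, to avoid double counting. Everything else — the base case (size-1 mergesort, a single point, trivially satisfied) and the final appeal to the geometric equivalence of Demaine et al.\ — is routine.
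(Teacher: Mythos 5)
Your high-level architecture matches the paper exactly: convert a BST mergesort into an offline arboral satisfaction algorithm by adding $O(|\tau|)$ points per merge, then invoke Demaine et al.'s equivalence. You also correctly identify the main obstacle. But the mechanism you propose to overcome it has a genuine gap, and it is precisely the gap the paper's proof is built around.

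The problem is your inductive invariant. You propose maintaining only that each strip is ``internally arborally satisfied'' and you add points on a \emph{single} column (the separating one). Then in step (ii) you want to argue that a deep point $p\in A$ can reach a block-boundary point added at the top of $\tau_a$ on the separating column via a monotone staircase through $A$. This does not follow from $A$ merely being arborally satisfied: $\tau_a$ is a top tree of the BST $T_A$ that the mergesort is carrying around, but ``arborally satisfied'' by itself says nothing about the relationship between the geometry of $A$ and the combinatorics of $T_A$. One can easily have $A$ arborally satisfied by a point set whose structure near its right boundary bears no relation to the top of $T_A$, in which case planting the keys of $\tau_a$ on that column creates fresh empty rectangles against interior points of $A$. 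The paper patches exactly this hole by strengthening the invariant to \emph{double congruence}: the set $A$ must be such that its left and right ``multi-treaps'' (priorities determined by horizontal distance to the left/right boundary) agree with the tree $T_A$. Observation~\ref{observation} is the statement you need — that under congruence, accessing a top tree along (or one past) the boundary column preserves arboral satisfaction — and it simply isn't available under your weaker hypothesis.

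This is also why the paper adds points on \emph{three} columns (leftmost of $A$, rightmost of $A$, rightmost of $B$) rather than one. The middle column makes the current merge arborally satisfied, but the outer two columns are what re-establish double congruence of the merged set with the merged tree $T$, so the same argument can be run at the next level up. With your single-column construction the combined strip's left and right boundary columns are untouched by the merge, so there is no reason they should be congruent with $T$, and the induction cannot close. Separately, your worry about a possible $\log$-factor from points being ``charged by multiple merges'' is a non-issue in the paper's accounting: each call to the arboral merge adds a constant number of new points per access that the BST merge itself already pays for, so the total is directly $O(\mathcal{A}(\pi))$ by summation over merges, with no amortization needed.
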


Theorem~\ref{thm: offlinealg} is proved by showing that for every BST mergesort algorithm, there is an offline BST algorithm that incurs the same cost as the BST mergesort within a constant factor. Our proof relies on the geometric interpretation of the BST model---instead of directly transforming a BST mergesort into an offline BST algorithm we use arborally satisfied sets as an intermediary.  The key ingredient is a transformation of a BST merge algorithm to an offline arborally satisfied set algorithm, which is equivalent to an offline BST algorithm by Demaine et al.'s theorem~\cite{demaine2009bst}. The following graphic illustrates the chain of dependencies. 

\begin{center}
\includegraphics[scale = .6]{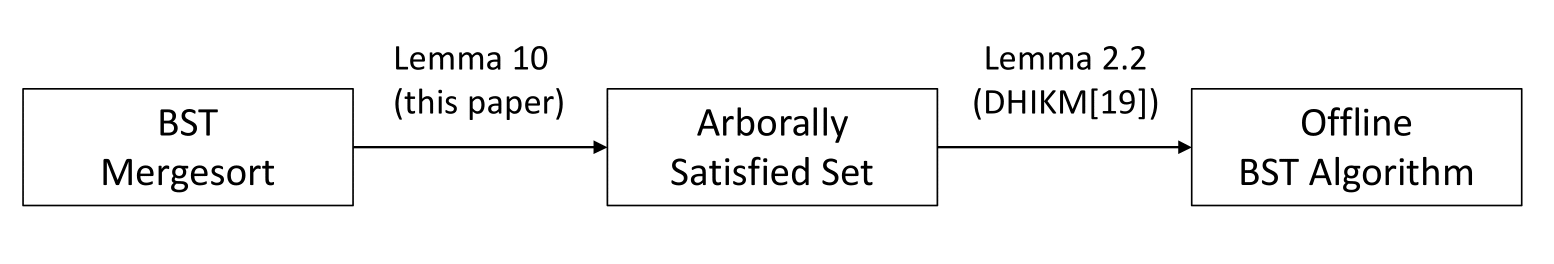}
\end{center}

\begin{algorithm2e}[H]
	\caption{ arboralMerge($A, B, \mathcal{M}$) \protect\\
		The arboral satisfaction algorithm; also illustrated in Figure~\ref{fig: arboralmerge}.}
	\label{algo: arboralmerge}\small
	\SetKwBlock{ParDo}{do in parallel}{end}
	\SetAlgoLined
	\KwIn{Two arborally satisfied sets $A, B$; BST merge algorithm $\mathcal{M}$.}
	\KwOut{An arborally satisfied set $C$ consisting of the concatenation of $A$ and $B$ as well as additional accesses needed to arborally satisfy the concatenation.}
	\lIf{$A == \emptyset$}{
		\KwRet{$B$} 
	} 
	\lIf{$B == \emptyset$}{
		\KwRet{$A$} 
	} 
	$C$ $\leftarrow$ concatenate $A$ and $B$ on the time axis \;
	$S$ $\leftarrow$ set of keys accessed when merging keys in  $A$ and $B$ using $\mathcal{M}$ \;
	access each key in $S$ in the first, middle
	(rightmost column of $A$), and end columns of $C$ \;
	\KwRet{$C$} \;	
	\vspace{0.5em}
\end{algorithm2e}

The main idea behind going from the mergesort to an arborally satisfied set is to transform a merge algorithm $\mathcal{M}$ into an algorithm that ``merges'' two arborally satisfied sets by concatenating them along the time axis (in this paper the $x$ axis) and resolving any unsatisfied rectangles between the two sets, thus producing an arborally satisfied set. This ``arboral'' mergesort algorithm would by definition be an offline arborally satisfied set algorithm. Ideally this arboral merge would use the same number of accesses as a corresponding BST merge $\mathcal{M}$. The key idea behind our algorithm is to use the keys accessed during the tree merge to arborally satisfy the sets on the two sides, as well as add points to make it easier to satisfy the condition on future merges.
In particular, the keys are added in three columns: the leftmost column of the left set, the rightmost column of the right set, and a middle column---we will use the rightmost of the left set, although the leftmost of the right set would also work.
Roughly, the accesses are placed down the middle to resolve unsatisfied rectangles, and they are placed down the leftmost and rightmost columns to restore invariants that are useful for future merges.
These accesses must by definition form top trees of the two trees being merged. Algorithm~\ref{algo: arboralmerge} shows the merging routine, which is used as a sub-step in an arborally satisfied set algorithm which recursively splits the input set to singletons and then uses Algorithm~\ref{algo: arboralmerge} to merge sub-parts of the input; see Figure~\ref{fig: arboralmerge} for an illustration of the merge step. Since we are adding a constant number of points per access in the mergesort, the total points added is proportional to the
cost of the mergesort, which in turn implies an offline BST algorithm with the same cost as the original BST mergesort.

The most difficult part of Theorem~\ref{thm: offlinealg} is proving correctness of the arboral mergesort---that is, that each execution of the arboral merge routine produces an arborally satisfied set. This will require some more background on arborally satisfied sets.

\begin{figure}
\centering
\begin{subfigure}[t]{.45\textwidth}
\centering
\includegraphics[width=\textwidth]{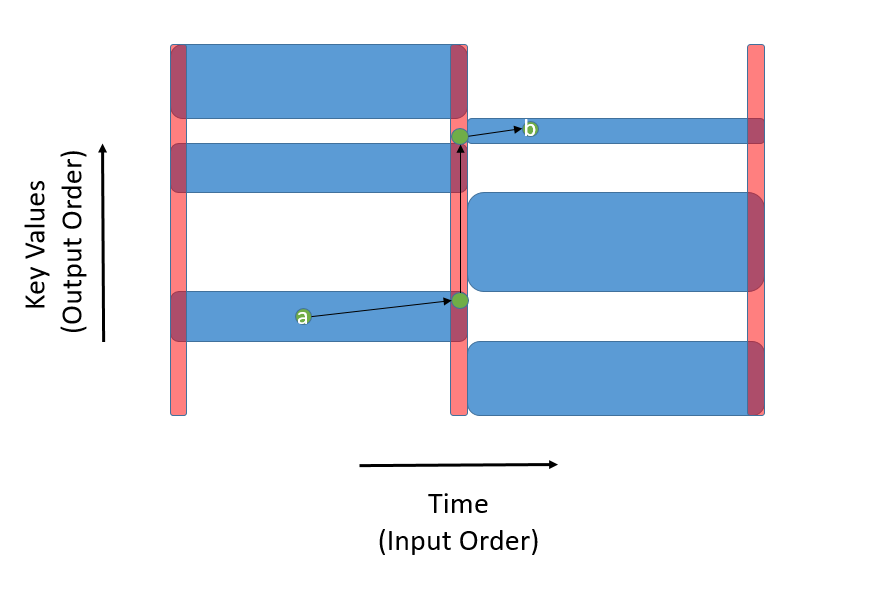}
\caption{An illustration of an arboral merge}
\label{fig: arboralmerge}
\end{subfigure}
\hfill
\begin{subfigure}[t]{.45\textwidth}
\centering
\includegraphics[width=\textwidth]{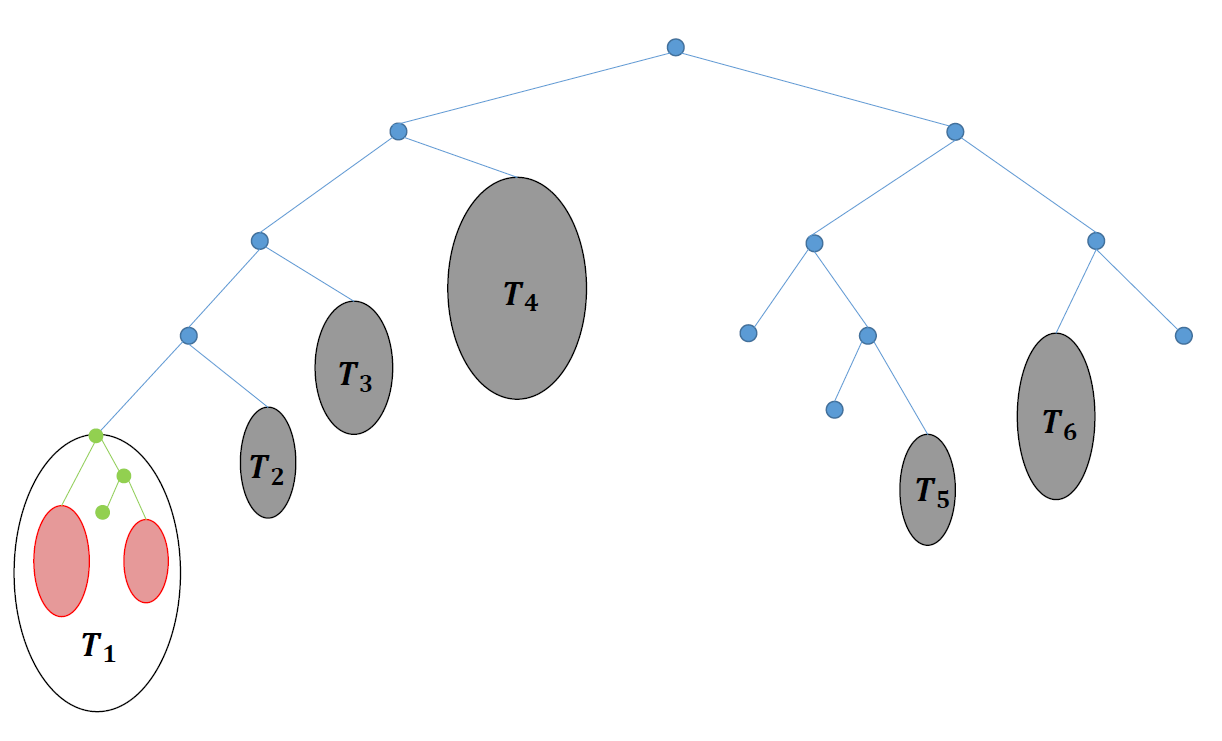}
\caption{A BST separated into a top tree and auxiliary trees}
\label{fig: treeArg}
\end{subfigure}
\caption{On the right, an illustration of the merging algorithm shown in Algorithm~\ref{algo: arboralmerge}. The blue squares represent members of the two sets being merged, while the red columns (referred to as $C_L, C_M, C_R$ in the proof of Theorem~\ref{thm: offlinealg}) illustrate the additional accesses necessary for the merge. A path drawn from $a \in A$ to $b \in B$ illustrates how the accesses in the middle column ensure the set is arborally satisfied. On the right, a BST with a top tree shown in blue and auxiliary trees $T_1$ through $T_6$, where the recursive structure is shown in $T_1$.}
\label{fig: dataStructures}
\end{figure}

\begin{definition}\label{def: treap}
A \textbf{treap} over a set of pairs $S$ is a BST over the first coordinate of each $s \in S$ and a min-heap over the second coordinate. Ties over the second coordinate are permitted and may be broken arbitrarily.
\end{definition}

A treap with ties on the priorities can also be expressed as a \textbf{multi-treap} (for multi-node treap), where ties are stored in a \textbf{multi-node} that may have more than two children. Child relations must obey an underlying BST structure on the nodes stored within a multi-node; hence a multi-node may have at most one more child than the number of keys in the multi-node. A key component of our proof is that we will define a multi-treap with respect to each side of an arborally satisfied set and then relate these to the BST the mergesort will generate.

\begin{definition}
Given an arborally satisfied set $A$, let the left (right) priority be the distance from the left (right) boundary of the first point in the row (closer has higher priority).   The \emph{left (right) multi-treap} of $A$ is the multi-treap defined by the (row, priority)  pairs.
\end{definition}

Since there can be many points in any given column of an arborally satisfied set, multi-nodes of the multi-treaps can have more than two children.\footnote{Demaine et al.~\cite{demaine2009bst} define a similar notion when proving that for any arborally satisfied set there is a BST execution with equivalent cost, but only with respect to one side, and only when sweeping column by column.}

\begin{definition}\label{def: congruence}
Given a BST $T$ and an arborally satisfied set $A$ with left (right) multi-treap $H_A$, $T$  is \emph{left (right) congruent with $A$} if there is some valid BST structure on $H_A$ (forming a binary subtree within each multi-node) such that the tree structure on $H_A$ is equal to $T$. $T$ is \emph{doubly} congruent with $A$ if it is both left and right congruent.
\end{definition}

Note that many BSTs can be left (equivalently right) congruent with the same arborally satisfied set $A$ due to equal priorities.  Also many arborally satisfied sets can be left (right) congruent with the same BST $T$.    
It may seem unlikely that a BST is doubly congruent with a arborally satisfied set, but in our construction we will maintain double congruence, and in particular we will
show that the point sets created by Algorithm~\ref{algo: arboralmerge} are doubly congruent to the corresponding tree.

The following observation will be useful for the proof of Theorem~\ref{thm: offlinealg}. It follows from a similar argument to Lemma 2.1 of Demaine et al.~\cite{demaine2009bst}.

\begin{observation}\label{observation}
Consider an arborally satisfied set $A$ that is double-congruent with a tree $T$.  Then for any top tree $\tau$ of $T$, if the keys in $\tau$ are accessed along either the left or right column of $A$, or one past the left or right column, the resulting set of points is arborally satisfied.
\end{observation}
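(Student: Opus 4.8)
The plan is to mimic the argument structure of Lemma~2.1 of Demaine et al.~\cite{demaine2009bst}, but carried out symmetrically with respect to both the left and right multi-treaps so that it applies to either column. Fix an arborally satisfied set $A$ that is double-congruent with a tree $T$, and let $\tau$ be any top tree of $T$. Since $A$ is left congruent with $T$, the left multi-treap $H_A$ of $A$, under some valid BST structure on each multi-node, is exactly $T$; in particular the rows of $A$ that correspond to the keys of $\tau$ are exactly those with the highest left priorities, i.e.\ the points of $A$ whose first occurrence is closest to the left boundary. The symmetric statement holds on the right. First I would argue for the left column: let $C_L$ be the leftmost column of $A$, and let $S$ be the set of new points placed at the $(x,y)$ coordinates $(\text{column of } C_L, \text{row of } k)$ for each key $k \in \tau$. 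I need to show $A \cup S$ (and also the variant where $S$ is placed one column to the left of $C_L$) is arborally satisfied.

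The key combinatorial fact to extract is this: because $\tau$ is a top tree of $T$ and $T$ is the tree structure on the left multi-treap $H_A$, the rows occupied by $\tau$ form a ``prefix'' of $A$ in the left-priority order, and moreover for any row $r$ of $\tau$, the portion of $A$ strictly between the left boundary and the first point of row $r$ is ``dominated'' in the heap order — every row whose first point is to the left of row $r$'s first point and which lies in the keyspace-interval spanned by the ancestors of $r$ in $\tau$ is also in $\tau$. This is precisely the statement that $\tau$ is downward-closed in the min-heap on left priorities restricted to each root-to-node path, which is exactly what being a top tree of the treap means. From here the verification splits into cases: (i) an unsatisfied rectangle between two points of $S$ — but the points of $S$ all lie in one column, so no such rectangle exists; (ii) a rectangle between a point $s \in S$ in row $r$ and an existing point $a \in A$ in row $r'$ with $r' \neq r$ — here I would use that the first point of row $r'$ is either to the right of $C_L$ (in which case the column of $a$ is $\geq$ the column of its first point $>$ column of $s$, and I find a corner using that the first point of the row of $r'$ and the first point of row $r$ are comparable, invoking the treap structure / arboral satisfaction of the original set), or $r'$ itself is a row of $\tau$, handled by the top-tree closure property giving a chain of first-points down to the LCA of $r$ and $r'$ in $\tau$; (iii) rectangles involving $s \in S$ and points in the same column or row are trivially satisfied. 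The case where $S$ sits one column past $C_L$ is essentially identical since moving the whole column left by one unit does not create any new rectangle corners among $A \cup S$ that were not already handled. The right column and right-past-column cases follow by the mirror-image argument using right congruence.

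I expect the main obstacle to be case (ii): carefully proving that for a point $s\in S$ in row $r$ and an arbitrary existing point $a \in A$ in a different row $r'$, there is a point strictly inside the rectangle they span. The subtlety is that $a$ need not be the first point of its row, and $r'$ need not be a row of $\tau$; one must chase through the monotone-path characterization of arboral satisfaction in the original set $A$ together with the heap-order / top-tree structure to locate a corner that lies in the open rectangle rather than on its boundary. The cleanest way to handle this is probably to first reduce to the case where $a$ is the \emph{first} point of row $r'$ (since if $a$ is not first, the first point of row $r'$ lies weakly inside the rectangle spanned by $s$ and $a$ on the horizontal side, or $A$'s own arboral satisfaction supplies a corner between $a$ and that first point that also serves), and then to use that $s$'s column equals $C_L$ and the treap/top-tree property to show the first points of $r$ and $r'$ — hence a corner — sit inside. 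Once this reduction and the top-tree closure lemma are stated precisely, the rest is the same bookkeeping as in Demaine et al., so I would cite their Lemma~2.1 for the analogous steps rather than reproving them.
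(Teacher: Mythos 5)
Your overall route is the same in spirit as the paper's---both proofs invoke the double-congruence hypothesis to use the tree structure of $T$ to locate a point inside any putative unsatisfied rectangle, and both treat the "one column past" case first and then observe that collapsing that new column onto the existing boundary column only shortens monotone paths. The execution differs, though. The paper's argument is shorter: for an unsatisfied rectangle between a point $a \in A$ and a new point $b$ in the new column, it takes the least common ancestor $c$ of the keys of $a$ and $b$ in $T$, observes that $c$ is an ancestor of $b$ and hence lies in the top tree $\tau$, and concludes that $c$'s new access sits at a row strictly between $a$ and $b$ in the column of $b$, landing inside the rectangle. Your proposal instead works bottom-up from the heap order on left priorities, reducing to first points of rows and chasing through the ancestor chain. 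Each buys something: the paper's LCA argument is much cleaner to state, but as written it quietly ignores the degenerate case where the key of $b$ is itself an ancestor of the key of $a$ (so $c$ equals $b$ and provides no third point); your first-point chase is exactly the tool needed to close that gap, since the child of $b$ toward $a$ has strictly intermediate key and, by left-congruence, its first occurrence lies no further right than $a$'s and no further left than the leftmost column, so it sits inside the rectangle.

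That said, your sketch of case (ii) has a hole of its own. You split on whether (a) the first point of row $r'$ lies strictly to the right of $C_L$ or (b) $r'$ is a row of $\tau$, but these two cases are not exhaustive: the first point of $r'$ can lie in the leftmost column $C_L$ (tying for highest left priority, hence in the root multi-node) while $r'$ is not chosen into the top tree $\tau$. In that situation you need a separate, short argument---using arboral satisfaction of the original $A$ and the fact that the offending rectangle then has width one (column of $s$ to $C_L$), so any new access in $\tau$ at the LCA of $r$ and $r'$, or the first point of the child of $r$ toward $r'$, closes the rectangle. Your characterization of $\tau$ as "exactly the rows with the highest left priorities" is also imprecise; a top tree is only an ancestor-closed connected subtree, which you correct later with the per-path downward-closure phrasing, and that corrected version is what the argument actually needs. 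With the exhaustive case split and that lemma stated carefully, your route gives a somewhat longer but arguably more airtight proof than the paper's LCA shortcut.
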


\begin{proof}
	Begin with the case where the keys of $\tau$ are accessed one past the leftmost or rightmost column of $A$. Assume for the sake of contradiction that there exists an unsatisfied rectangle---that is, a rectangle with two points at its corners and no points contained within it---between access $a \in A$ and access $b \in \tau$. Consider the least common ancestor $c$ of $a$ and $b$, whose key value must be between those of $a$ and $b$. Since by our assumption, there are no keys accessed between the rectangle defined by $a$ and $b$, this contradicts the fact that $\tau$ is a continuous subtree of $T$, since $c$ must be accessed in $\tau$ to reach $b$. 
	
	This leaves the case where accesses to $\tau$ are instead placed on the leftmost or rightmost column of $A$---that is, in addition to accesses that were already there. Consider an arbitrary access $a \in A$ and any access $b \in \tau$. If the accesses in $\tau$ are placed on a new column past the rightmost (leftmost) column of $A$, there is a monotonic path from $a$ to $b$ with accesses at every corner. If the accesses in $\tau$ are imposed on the rightmost (leftmost) column of $A$ instead, the same accesses still form a monotonic path, since this only causes the elimination of one right (left) move.
\end{proof}

We now prove correctness of the arboral mergesort algorithm.

\begin{lemma}\label{lem: arboralcorrectness}
The arboral mergesort algorithm is correct: that is, it returns an arborally satisfied set.
\end{lemma}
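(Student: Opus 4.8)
The plan is to induct on the recursion tree of the arboral mergesort, proving the strengthened invariant that every set produced by a recursive call is not only arborally satisfied but also doubly congruent with the BST produced by the corresponding BST merge in $\mathcal{A}$. The base case is a singleton, which is trivially arborally satisfied and congruent with a single-node tree. For the inductive step, I would assume that $A$ and $B$ are arborally satisfied and doubly congruent with the trees $T_A$ and $T_B$ that the BST mergesort holds at that stage, and argue that the output $C$ of \texttt{arboralMerge}$(A,B,\mathcal{M})$ is arborally satisfied and doubly congruent with the tree $T$ returned by the BST merge $\mathcal{M}(T_A,T_B)$.

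For arboral satisfaction of $C$, I would split the pairs of points into three cases: (i) both points inside $A$, (ii) both inside $B$, (iii) one in $A$ and one in $B$. Cases (i) and (ii) follow from the inductive hypothesis together with Observation~\ref{observation}: the only new points added inside the block of $A$ lie on the rightmost column of $A$ (the middle column $C_M$) and correspond to the top tree $\tau_a$ examined by $\mathcal{M}$, while those added one-past the left column $C_L$ also form a top tree, so by the observation these additions preserve arboral satisfaction of the $A$-side; symmetrically for $B$ using $C_M$ (which is one-past $B$'s left column, i.e. adjacent) and $C_R$. For case (iii), given $a\in A$ and $b\in B$ I would use the key idea sketched in the text: the least common ancestor of $a$ and $b$ in the merged tree $T$ is a block boundary, hence is among the keys in $S$ accessed in the middle column $C_M$; this gives a point in $C_M$ whose key lies between those of $a$ and $b$, which together with the points already guaranteed on the $A$-side and $B$-side yields a monotone staircase path from $a$ to $b$ with a point at every corner. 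I would phrase this cleanly via the staircase-path equivalent of Definition~\ref{def: arborallysatisfiedset}.

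The remaining obligation — and the step I expect to be the main obstacle — is maintaining double congruence, i.e. showing the point set $C$ is doubly congruent with $T=\mathcal{M}(T_A,T_B)$. Here I would analyze how the left priorities of $C$ arise: for a row originally in $A$, placing accesses in the new middle column $C_M$ may change its first point and hence its left priority, and I must check that the resulting left multi-treap of $C$ admits a BST structure equal to $T$. This requires relating the min-heap order on the new priorities to the ancestor order in $T$, using the facts that (a) $\tau = \tau_a\cup\tau_b$ is a top tree of $T$ with the subtrees below it unchanged from $T_A,T_B$, and (b) double congruence of $A$ with $T_A$ and of $B$ with $T_B$ controls the priorities within those unchanged subtrees. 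The right priorities are handled symmetrically using $C_R$. I anticipate the bookkeeping of exactly which column contributes the "first point in the row" after the merge — and the verification that ties are broken consistently with some valid BST structure on the multi-nodes — to be the delicate part; the geometric intuition is clear but making the priority/heap correspondence airtight is where the real work lies. Finally, the accounting that $C$ adds only $O(|\tau|)=O(|\tau_a|+|\tau_b|)$ points, a constant per access charged by $\mathcal{M}$, is immediate from the algorithm and completes the connection back to Theorem~\ref{thm: offlinealg}.
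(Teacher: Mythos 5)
Your framing matches the paper exactly: strengthen the invariant to ``arborally satisfied \emph{and} doubly congruent with the tree the BST mergesort holds,'' induct on the recursion, handle arboral satisfaction by case-splitting on where the pair of points lives, and carry double congruence forward. The arboral-satisfaction half is essentially the paper's argument, and your instinct to treat double congruence as the crux is also right.

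But you stop at the crux. You write that relating the new priorities to the heap order in $T$ is ``the delicate part'' and ``where the real work lies,'' and you do not supply that argument; you only name the facts (a) and (b) you would lean on. That is the part of the lemma that actually needs proving. The missing step is shorter and cleaner than you anticipate. To show, say, right congruence of the merged set $AB$ with $T$: every key accessed by the merge is placed in $C_R$, the rightmost column of $AB$, so all those keys have the \emph{same, maximal} right priority and therefore sit together in the root multi-node of the right multi-treap; since these keys are precisely a top tree $\tau$ of $T$, the multi-node can be internally arranged to equal $\tau$. For everything below, the decisive observation — which your sketch omits — is that each subtree of $T$ hanging off $\tau$ comes \emph{entirely} from $T_A$ or \emph{entirely} from $T_B$ unchanged, and its points live entirely in the columns of $A$ or entirely in those of $B$. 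Hence only within-$A$ (resp.\ within-$B$) relative priorities matter for those subtrees, and those are unchanged by the merge (a uniform shift for the $B$ side, no shift for the $A$ side). Induction then gives congruence of each subtree, and the whole of $T$ is right congruent with $AB$; left congruence is symmetric via $C_L$.

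One smaller confusion in your sketch: you say that for a row originally in $A$, accesses in the \emph{middle} column $C_M$ ``may change its first point and hence its left priority.'' They cannot — $C_M$ is the rightmost column of $A$, so it never becomes the leftmost point of an $A$-row. What can change an $A$-row's left priority is an access in $C_L$ (the leftmost column), and that only happens for keys in $\tau$, which is exactly the case the root-multi-node argument absorbs. The column $C_M$ instead affects the left priorities of $B$-rows for keys in $\tau$, and it is $C_L$/$C_R$, not $C_M$, that governs the final congruence claim.

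Finally, the cost accounting ($O(|\tau|)$ added points, constant factor per access) you append is not part of this lemma; it belongs to the proof of Theorem~\ref{thm: offlinealg}.
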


\begin{proof}
We will use the following inductive hypothesis on the arboral merge algorithm (Algorithm~\ref{algo: arboralmerge}) to show correctness: Algorithm~\ref{algo: arboralmerge} returns an arborally satisfied set which is double-congruent to the tree $T$ returned by the corresponding BST mergesort algorithm.

\textbf{Base Case.} When the set is just a single point, both arboral satisfaction and double congruence follow trivially.

\textbf{Inductive Step.} The inductive step is broken into several claims, and some new notation is called for. The two arborally satisfied sets being merged are $A$ and $B$, and by the inductive hypothesis are both arborally satisfied and double-congruent to trees $T_A$ and $T_B$, respectively. The additional accesses specified by the mergesort are added to three columns. Let $C_L$, $C_R$, and $C_M$ denote the set of points which the arboral merge adds
along the left, right, and middle columns respectively; see Figure~\ref{fig: arboralmerge} for an illustration.
It will also be useful to denote the subset of a $C_i$ ($i \in \{L,R,M\}$) consisting only of accesses to keys in $A$ or $B$. These subsets are denoted by $C_i(A)$ or $C_i(B)$.

The merge will break $A$ and $B$ into contiguous blocks that are interleaved in key order.   As assumed in the model, the block boundaries must be accessed by the merge, and therefore included in the $C_i$.  In general the $C_i$ will include other points as well.
The inductive step has to show both that the resulting set is arborally satisfied and is doubly congruent to the merged tree $T$.

\textbf{Arboral Satisfaction.}   
Points within $A$ or $B$ are satisfied by the inductive hypothesis.  Points both in $C_L, C_R, C_M$ are satisfied by the fact that they access precisely the same keys.  This leaves the two more interesting cases:
(1)  pairs of points one from the previously exiting points (in $A$ or $B$) and one from the new boundaries $C_L, C_R, C_M$,
and (2) pairs of points one from $A$ and one for $B$.
For the first case,  $A$ is double-congruent to tree $T_A$ (by the inductive hypothesis),
and $C_i(A)$ is a top tree of $A$ (by construction), so we can apply Observation~\ref{observation} for points in $A$ and $C_i(A)$.
Now since the boundaries of each block of $A$ must be in $C_i(A)$, we can get from a point
in $A$ to a point in $C_i(B)$ using a monotonic path by going to a boundary point in $C_i(A)$ and then up or down the column to the point in $C_i(B)$.
Symmetrically points in $B$ can get to points in $C_i(B)$ and  $C_i(A)$ by a monotonic path.
For case (2) consider any point $a \in A$ and point $b \in B$.
There is a monotonic path between $a$ and $b$ by composing the monotonic path between $a$ and a boundary point in
$p_a$ in $C_m(A)$, between $b$ and a boundary point $p_b$ in $C_m(B)$, and between $p_a$ and $p_b$, which are in the same
column; see Figure~\ref{fig: arboralmerge}. 

\textbf{Double congruency to $T$.}  
We will show that the set $AB$ returned by the arboral merge algorithm is right congruent to $T$.  Left congruency is true by symmetry.
The keys in $C_R$ (those accessed by the merge) correspond to a top tree $\tau$ of $T$.
Since all keys in the column $C_R$ have the same highest priority we can organize the root multi-node to match the structure
of $\tau$ making those nodes congruent.
Now consider the subtrees not in $\tau$.   They properly are lower in the tree and have equal or lower priority (only equal if they happen to be in the last row).  Furthermore the subtrees are separated by keys in $\tau$.  Each such subtree either comes
completely from $T_A$ or completely from $T_B$ and have the same structure as before the merge (they were not touched by the merge).   Furthermore the points from $T_A$ ($T_B$) only appear in $A$ ($B$).
This implies the relative priorities have not change for those points when merging into $AB$.   By induction, the trees
$T_A$ ($T_B$) were congruent to $A$ ($B$) before the merge so the subtrees were congruent and remain congruent after
the merge (neither the relative priorities nor tree structure have changed).  This implies the whole tree $T$ is 
congruent with $AB$.
\end{proof}

The proof of Theorem~\ref{thm: offlinealg} now follows easily.

\begin{proof}[Proof of Theorem~\ref{thm: offlinealg}]
The theorem follows once the cost of the arboral mergesort is shown to be $O(\mathcal{A}(\pi))$. Since the cost of $\mathcal{A}$ is dominated by the cost of each merge execution, and that cost is at most multiplied by six in each call to the arboral merge, the cost of the arboral mergesort is at most $6\mathcal{A}(\pi)$. When the arboral mergesort is transformed into an offline BST algorithm, the cost remains the same for a total cost of $O(\mathcal{A}(\pi))$.
\end{proof}

\subsection{Partition Sort}\label{sec: partitionsort}

We now consider a class of sorting algorithms motivated by quicksort, which we refer to as partition sorts.  As in the case of mergesort, we limit ourselves to working with binary trees.  However, in this case the trees are not ordered by key, but instead are ordered by input order.   The algorithm is like quicksort in that it picks a pivot, partitions the keys on the pivot and recurses.
Since we are interested in lower bounds (i.e. showing the cost of partition sort is at least as great as optimal BSTs), we can
assume an oracle picks the perfect pivot (e.g., the median).
As with mergesort, to achieve better than trivial $O(n \log n)$ bounds it is important that the partition need not visit the whole tree it is partitioning, but rather just some top tree. This allows for sending whole subsequences to the lesser or greater/equal side without
visiting all nodes.  More precisely here are the definitions of partition and partition sort.

\begin{definition}\label{def: btpartition}
  A \textbf{BT partition} takes a BT $T$ and for some top tree $\tau$ of $T$ returns two BTs $T_A, T_B$ with distinct keys such that for some top trees $\tau_a$ of $T_A$ and $\tau_b$ of $T_B$, $\tau = \tau_a \cup \tau_b$,  the other subtrees of $T$ appear in either $T_A$ or $T_B$ unchanged, and $\tau$ contains the block boundaries of the partitioned output.  Furthermore the preordering of the keys in $T_A$ or $T_B$ are a subsequence of the preordering in $T$ (i.e. left-to-right ordering).
 We assume both partitions are non-empty.
The \emph{number of accesses} used by the partition is $|\tau|$. 
\end{definition}

\begin{definition}\label{def: btpartitionsort} A \textbf{BT partition sort} on a BT tree $T$ (1) partitions $T$ into $T_a$ and $T_b$ such that for some key $k$ all keys in $T_a$ are less than $k$ and all keys in $T_b$ are greater or equal to $k$, (2) recurses on each partition, and (3) returns the left and right results appended.
The recursion terminates when the tree is of size one.
The number of accesses used by the partition sort is the sum of accesses across all partitions.  
\end{definition}

Our goal is to show the following theorem, which has the same form as the result for mergesort.

\begin{theorem}\label{thm: offlineps}
Let $\mathcal{A}$ be a BT partition sort algorithm that sorts permutation $\pi$ using $\mathcal{A}(\pi)$ accesses, and let  $\text{OPT}_{\text{BST}}(\pi)$ be the optimal cost of querying $\pi$ with a BST algorithm. Then $\text{OPT}_{\text{BST}}(\pi) \in O(\mathcal{A}(\pi))$. 
\end{theorem}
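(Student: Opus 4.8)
The plan is to reduce Theorem~\ref{thm: offlineps} to Theorem~\ref{thm: offlinealg} by combining two ingredients: a duality that turns a BT partition sort on $\pi$ into a BST mergesort on $\pi^{-1}$ of the same cost, and the transpose-invariance of the optimal BST cost. Concretely, I would first show that a BT partition sort $\mathcal{A}$ that sorts $\pi$ using $\mathcal{A}(\pi)$ accesses yields a BST mergesort $\mathcal{B}$ that sorts $\pi^{-1}$ with $\mathcal{B}(\pi^{-1}) = \mathcal{A}(\pi)$ (or at least $O(\mathcal{A}(\pi))$). Theorem~\ref{thm: offlinealg} then gives $\text{OPT}_{\text{BST}}(\pi^{-1}) \in O(\mathcal{B}(\pi^{-1})) = O(\mathcal{A}(\pi))$, and once we know $\text{OPT}_{\text{BST}}(\pi) = \Theta(\text{OPT}_{\text{BST}}(\pi^{-1}))$ we are done.

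The transpose-invariance is the routine part. The point set of $\pi$ is $\{(i,\pi(i))\}$, and reflecting it across the main diagonal produces exactly the point set of $\pi^{-1}$. Arboral satisfaction (Definition~\ref{def: arborallysatisfiedset}) is symmetric in the two coordinates, so a set is an arborally satisfied superset of the point set of $\pi$ if and only if its reflection is an arborally satisfied superset of the point set of $\pi^{-1}$; in particular the two have minimum arborally satisfied supersets of equal size. By Demaine et al.'s equivalence between BST algorithms and arboral satisfaction algorithms recalled in Section~\ref{sec: prelims}, this minimum size is a constant-factor proxy for $\text{OPT}_{\text{BST}}$, so $\text{OPT}_{\text{BST}}(\pi) = \Theta(\text{OPT}_{\text{BST}}(\pi^{-1}))$.

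The substantive step, and the one I expect to be the main obstacle, is the duality itself. The idea is to run the partition-sort recursion in reverse: its recursion tree, whose leaves are singletons and whose internal nodes are partitions, becomes the recursion tree of a mergesort, whose leaves are singletons and whose internal nodes are merges, with each partition of a BT $T$ into $T_A, T_B$ matched to the merge that recombines the transformed results for $T_A$ and $T_B$. One must verify that the structural data of Definitions~\ref{def: btpartition}–\ref{def: btpartitionsort} transports correctly under reflection: a BT in the partition world is ordered by input order (time), and reflecting across the diagonal turns the time axis into the key axis, so it becomes a legitimate BST; the top tree $\tau = \tau_a \cup \tau_b$ witnessing the partition's $|\tau|$ accesses is exactly the top tree witnessing the resulting merge's $|\tau|$ accesses; the block boundaries of the partitioned output coincide with the block boundaries the merge is required to access; and the condition that the preordering of keys in $T_A, T_B$ is a subsequence of the preordering in $T$ translates to the merge's requirement that the subtrees of $\tau$ be unchanged subtrees of $\tau_a, \tau_b$. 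The care needed is in checking that the split key $k$ used by the partition sort corresponds to a valid split of the input sequence of the mergesort, and that the indexing keys (which encode input order) of the partition-sort trees become bona fide BST keys after reflection so that the merge sequence is well defined.

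As a fallback, if a clean duality statement proves cumbersome to phrase, one can instead re-run the argument of Section~\ref{sec: mergesorting} with rows and columns swapped: define an ``arboral partition'' that splits an arborally satisfied set along the key axis and adds the keys accessed by a BT partition along three \emph{rows} (the top row, the bottom row, and the row just above or below the split), prove the analogue of Lemma~\ref{lem: arboralcorrectness} by the symmetric induction, observe that the union of all points added over the whole recursion is an arborally satisfied superset of the point set of $\pi$ of size $O(\mathcal{A}(\pi))$, and invoke the arboral-to-BST equivalence. This avoids the duality but essentially re-derives the mergesort proof in transposed form.
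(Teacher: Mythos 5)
Your primary argument is the same as the paper's: the paper establishes the duality as Lemma~\ref{lem: btpartition} (BT partition sort on $\pi$ $\leftrightarrow$ BST mergesort on $\pi^{-1}$ with identical access counts, via reversing the roles of time and key order), then invokes Theorem~\ref{thm: offlinealg} together with the $90^\circ$-rotation invariance of arborally satisfied sets to conclude---which is exactly your observation that $\text{OPT}_{\text{BST}}$ is transpose-invariant. Your fallback of re-deriving the arboral argument with rows and columns swapped is also essentially what the paper's rotation remark amounts to, so there is no substantive divergence.
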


Our approach is to show a one-to-one correspondence between the tree-based merge and partition sorts.

\begin{lemma}\label{lem: btpartition}
For any BT partition sort algorithm  $\mathcal{A}$ that sorts permutation $\pi$ using $\mathcal{A}(\pi)$ accesses, 
there is a BST mergesort sort algorithm $\mathcal{B}$ that sorts permutation $\pi^{-1}$ using $\mathcal{B}(\pi^{-1})=\mathcal{A}(\pi)$ accesses, and vice versa.
\end{lemma}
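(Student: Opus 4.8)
The plan is to build an explicit, cost-preserving bijection between runs of BT partition sort on $\pi$ and runs of BST mergesort on $\pi^{-1}$, obtained by transposing the two axes of the geometric picture. Identify key $j$ of $\pi$ with the point $(\pi^{-1}(j), j)$ (time on the horizontal axis, key on the vertical axis), so that the point set of $\pi^{-1}$ is exactly the transpose of that of $\pi$. Relabel each key $j$ by its time of occurrence $t_j := \pi^{-1}(j)$; this sends a key range $[a,b]$ of $\pi$ to the position range $[a,b]$ of $\pi^{-1}$, and --- the crucial point --- it turns a binary tree whose keys, read left to right, appear in $\pi$ in increasing time into a \emph{binary search tree} over $\{t_j : j \in [a,b]\}$, because the left-to-right order of a binary tree is exactly its in-order traversal, and a binary tree is a BST precisely when that order is increasing. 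So ``a BT over a set of keys of $\pi$, ordered by input time'' on the partition side corresponds exactly to ``a BST over the corresponding set of times'' on the merge side.

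First I would match the recursion structures. A BT partition sort on $\pi$ is determined by a rooted tree of key-range bisections (each internal node splits $[a,b]$ at a pivot $k$ into $[a,k)$ and $[k,b]$, with singletons at the leaves) together with, for each internal node, the top tree examined by the partition performed there. A BST mergesort on $\pi^{-1}$ is determined by a rooted tree of position-range bisections of $\pi^{-1}$ together with, for each internal node, the top trees examined by the merge there. Since a key-range bisection of $\pi$ is exactly a position-range bisection of $\pi^{-1}$, the two families of bisection trees coincide, and it remains to match, node for node, a partition with a merge at equal cost.

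The heart of the argument is the local duality between one partition and one merge. Fix an internal node with range $[a,b]$ and pivot $k$. In the partition sort this node \emph{consumes} a tree $T$ over the keys in $[a,b]$ (ordered by time) and \emph{produces} trees $T_A$ over $[a,k)$ and $T_B$ over $[k,b]$: by Definition~\ref{def: btpartition} it examines a top tree $\tau$ of $T$ splitting into top trees $\tau_a$ of $T_A$ and $\tau_b$ of $T_B$ with $\tau = \tau_a \cup \tau_b$, leaves the other subtrees of $T$ untouched, includes all block boundaries of the output, preserves left-to-right order into each part, and costs $|\tau|$. Reading this in reverse under the relabeling yields precisely a BST merge in the sense of Definition~\ref{def: bstmerge}: it \emph{consumes} BSTs $S_A, S_B$ over the times of the keys in $[a,k)$ and $[k,b]$, \emph{produces} a BST $S$ over the times of the keys in $[a,b]$, with top tree $\tau = \tau_a \cup \tau_b$, the unchanged subtrees of $\tau_a$ and $\tau_b$, the block boundaries in $\tau$, and cost $|\tau|$. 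The one point that needs care is that the two notions of ``block'' coincide: a block of the partition is a maximal run, in time order, of keys of $[a,b]$ lying entirely below $k$ or entirely at-or-above $k$, and under $j \mapsto t_j$ this is exactly a maximal run, in sorted (time) order, of elements of $S$ coming entirely from $S_A$ or entirely from $S_B$ --- a block of the merge; the ``subsequence'' clause of Definition~\ref{def: btpartition} holds automatically on the merge side, since the in-order of $S_A$ (resp.\ $S_B$) is a subsequence of that of $S$ by construction.

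Finally I would assemble both directions. Given a BT partition sort $\mathcal{A}$ on $\pi$, define a BST mergesort $\mathcal{B}$ on $\pi^{-1}$ on the same bisection tree, using at each internal node the merge dual to $\mathcal{A}$'s partition; an induction from the leaves up (base case: singletons) shows that the mergesort subtree at a node outputs exactly the relabeled image of the tree $\mathcal{A}$ fed into the corresponding partition subtree, so $\mathcal{B}$ is a valid BST mergesort, its final BST is sorted, and $\mathcal{B}(\pi^{-1}) = \mathcal{A}(\pi)$ because cost is summed over the same operations. The converse is symmetric: reversing each merge of a BST mergesort into its dual partition and reading the bisection tree top-down turns a BST mergesort on a permutation $\sigma$ into a BT partition sort on $\sigma^{-1}$ of equal cost. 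I expect the main obstacle to be exactly the bookkeeping in this local duality --- verifying that the ``top tree'', ``unchanged subtree'', and ``block boundary'' requirements of Definitions~\ref{def: btpartition} and~\ref{def: bstmerge} translate into one another \emph{exactly} rather than merely up to constants --- together with keeping straight that a partition sort runs top-down while a mergesort runs bottom-up, so the dual of a partition-sort run is a mergesort run with its operations performed in reverse temporal order, which is harmless since each operation's access count does not depend on when it is performed.
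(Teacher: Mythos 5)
Your proposal is correct and takes essentially the same approach as the paper: both arguments run BST mergesort backwards while swapping the time and key axes, identifying each partition step with the reversal of a merge step and observing that the examined top tree (hence the access cost) is identical under the transposition. The paper's version is much terser; your write-up fills in the bookkeeping (the relabeling $j \mapsto \pi^{-1}(j)$, the match between key-range and position-range bisections, and the clause-by-clause correspondence between Definitions~\ref{def: btpartition} and~\ref{def: bstmerge}) that the paper leaves implicit.
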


\begin{proof}
  The idea is to consider running BST mergesort backwards, while reversing the role of time and key order.
  Consider undoing a merge---i.e. taking the merged tree and partitioning back into its two inputs.  Reversing the roles of time and keys, this is equivalent to a BT partition where keys are time order and the partitions is on the first time of the right partition.
In particular the  size of the top tree and therefore access cost is identical.   This continues to be true on the recursive calls.  
  In both cases the base case is of size one.   Hence the total access costs of the two algorithms are identical, one applied to the inverse permutation of the other.
\end{proof}

Since the size of arborally satisfied sets are invariant under rotation by 90 degrees, reversing the role does not 
affect the size of the set.    Since the proof of Theorem~\ref{thm: offlinealg} first showed how to map a BST mergesort to
an arborally satisfied set and this implied the same bound on a offline BST, this remains true if we rotate the input,
arborally satisfy it in the same way, and generate a BST.
Theorem~\ref{thm: offlineps} follows.
Taken together, Theorem~\ref{thm: offlinealg} and Theorem~\ref{thm: offlineps} show the statement in Theorem~\ref{thm: dualoffline}. Although the duality of mergesort and quicksort has been recognized before we are not aware of any formal 
correspondence such as the one given here.

\section{The \LIBcaps}
\label{sec: log-interleave}

The following two sections contain results that illustrate the utility of the approach shown in Theorem~\ref{thm: dualoffline}: namely, that results in the sorting cost model can directly translate to interesting results in the BST model. In this section we propose an information-theoretic bound on both the cost of accessing a sequence in the BST model and sorting a list in the comparison model. Theorem~\ref{thm: lib} shows that the \LIB{} is within a $\lg \lg n$ multiplicative factor of a known lower bound in the BST model. In the next section, we show that there exists a BST mergesort algorithm that sorts any permutation $\pi$ in $O(\lib(\pi))$ comparisons, and thus combined with Theorem~\ref{thm: offlinealg} shows the existence of an offline BST algorithm with the same costs in the BST model.

The \LIB{} can be thought of as an algorithmic perspective on Wilber's \IB. Let $P$ be the static tree with the keys of a permutation $\pi$ at the bottom. Consider sorting $\pi$ via mergesort: clearly, each non-leaf vertex of $P$ denotes a merge. The interleave bound charges unit cost for each switch between the right and left subtree during a merge. Another way of looking at this cost is that every continuous run of accesses to the left subtree incurs unit cost. Thus, a mergesort with a merge step that incurred unit cost for each consecutive run---as opposed to the standard mergestep which charges for the size of each run---would sort $\pi$ using $O(\ib(\pi))$ comparisons.

Lecomte and Weinstein~\cite{lecomte2020wilber} and Chalermsoook et al.~\cite{chalermsook2020pinning} independently show that the merge step described above does not exist. However, it is possible to charge the logarithm of the size of each consecutive run, as shown by Brown and Tarjan in~\cite{brown1980design}. This idea leads us to using such a merge step as an information-theoretic bound, which applies to sorting (sequentially and in parallel), and the BST model. The \LIB{} is formally defined below; note its similarity to the \IB.

\begin{figure}[t]
	\vspace{-.5em}
	\begin{center}
		\includegraphics[width=.5\columnwidth]{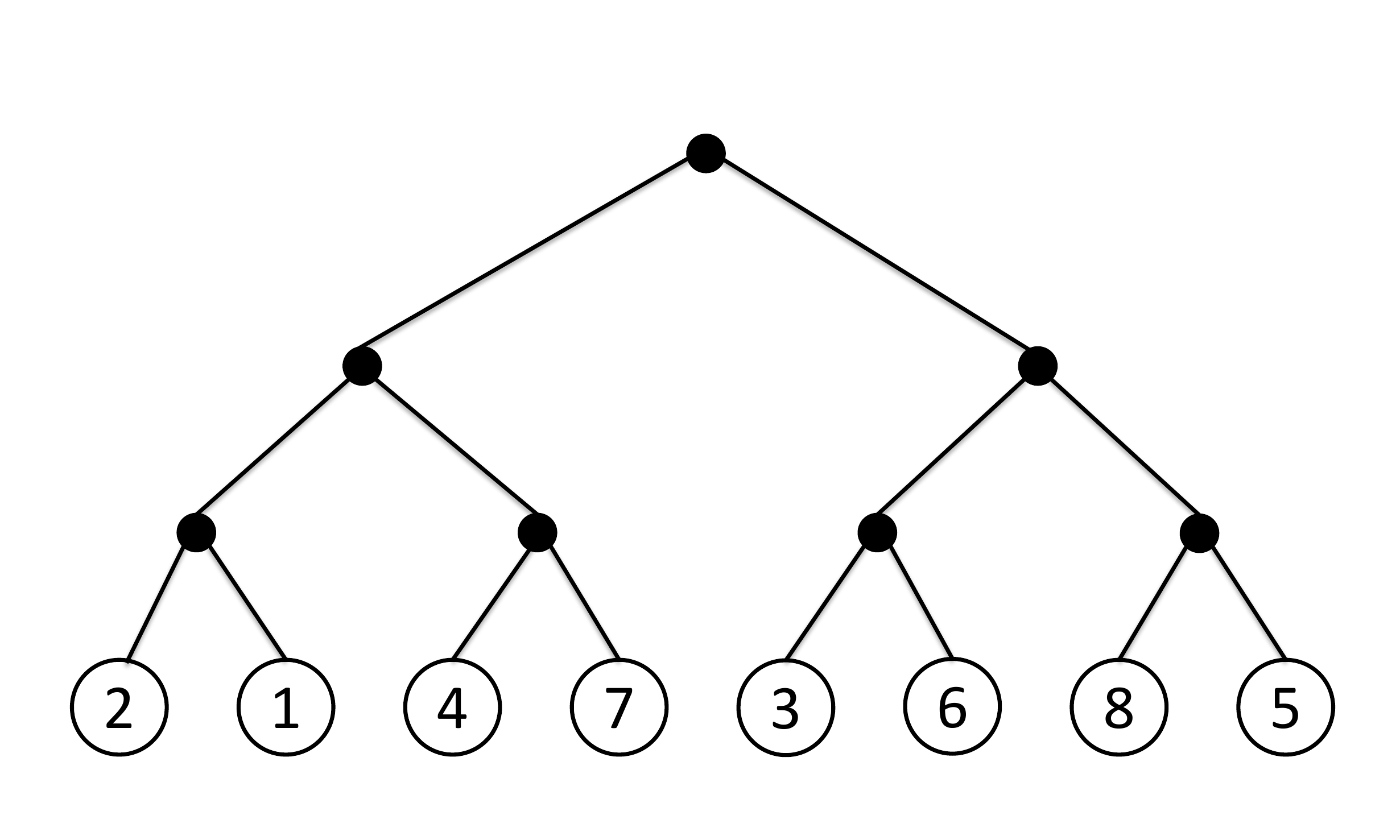}
	\end{center}\vspace{-1.5em}
	\caption{\small Consider accessing the keys 1-8 in order. For the vertex $v_1$ at the root of the tree shown here, the labeled access sequence is [L, L, L, R, L, R, R, R]. Since the access sequence switches between the left and right subtree three times, $\ib(v_1)=3$. Similarly, $\lib(v_1) = \lg 4 + \lg 2 + \lg 2 + \lg 4$ since the smallest possible decomposition of the labeled access sequence consists of [[L, L, L], [R], [L], [R, R, R]].}
	\label{fig: IB}
	\vspace{-1em}
\end{figure}

\begin{definition}\label{def: lib}
Given an access sequence $\pi$, fix a static binary tree $P$ with the keys of $\pi$ at the leaves. For each vertex $v_j$, query the descendants of $v_j$ in sorted order, then label each with R or L depending on whether it is in the left or right subtree of $v_j$. Let $S(v_j)$ represent the decomposition of this labeling into the smallest possible number of runs of consecutive accesses to L or R in $v_j$. Then $\lib(v_j) = \sum_{r_i \in S(v_j)} \lg (|r_i|+1)$ and $\lib(\pi) = \sum_{v \in P} \lib(v_i)$.
\end{definition}

See Figure~\ref{fig: IB} for an example calculation.

A natural question one might ask about a BST algorithm or an adaptive sorting algorithm is how far, in the worst case, is the cost of this algorithm from any known lower bounds? Or, in other words, how close is this algorithm to optimal? In this section, we will settle this question for the \LIB{} in the BST model, ending in the following theorem:

\thmlib

Our first step is to show that we cannot hope to do better than a $\lg \lg n$ separation; this is stated in the following lemma. This result is similar to the separation result in Theorem 2 of Lecomte and Weinstein~\cite{lecomte2020wilber}; furthermore, the result implies that an online BST algorithm using $\lib(\pi)$ accesses cannot be dynamically optimal. 

\begin{lemma}\label{lem: regandlib}
There exists a permutation $\pi$ such that $\lib(\pi) = \Theta(\lg \lg n \; \ib(\pi))$.
\end{lemma}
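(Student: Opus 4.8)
The plan is to exhibit one explicit permutation for which the balanced static tree already separates the two measures. Assume $n = 2^m$ and fix $B = 2^b$ with $B = \Theta(\lg n)$; set $N = n/B = 2^{m-b}$. Partition the keys $1,\dots,n$ into $N$ consecutive \emph{key-blocks} of size $B$ and the time slots into $N$ consecutive \emph{time-blocks} of size $B$, and let $\rho$ be the bit-reversal permutation on $[N]$. Define $\pi$ so that the $j$-th key-block is accessed during the $\rho(j)$-th time-block, with the $B$ keys of a block appearing in sorted order inside their time-block. (Equivalently: writing $k-1$ in binary with a high part of $m-b$ bits and a low part of $b$ bits, $\pi^{-1}(k)-1$ bit-reverses the high part and leaves the low part unchanged.) I will show $\ib(\pi)=\Theta(n)$ and $\lib(\pi)=\Theta(n\lg\lg n)$, which gives the lemma.

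For $\lib(\pi)$, let $P$ be the balanced tree on time order and analyze its nodes level by level. A node $v$ at level $\ell\le m-b-1$ contains several complete key-blocks; reading its descendant keys in sorted order yields a label string of the form $(\mathrm{L}^{B}\mathrm{R}^{B})$ repeated $2^{m-1-b-\ell}$ times --- each key-block is a single monochromatic run of length $B$, and consecutive blocks lie on opposite sides because the L/R-deciding bit is a middle bit of the key while the block content fills the low bits. Hence $v$ contributes $2^{m-b-\ell}$ runs of length $B$, so $\lib(v)=2^{m-b-\ell}\lg(B+1)$, and summing over the $2^\ell$ nodes at each of these $m-b$ top levels gives $\Theta\big((m-b)\,N\,\lg(B+1)\big)=\Theta\!\big(\lg n\cdot\tfrac{n}{\lg n}\cdot\lg\lg n\big)=\Theta(n\lg\lg n)$. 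Every node at level $\ell\ge m-b$ lies inside one key-block, whose keys are sorted, so these nodes contribute as a sorted block does, namely $\Theta(B)$ apiece and $\Theta(NB)=\Theta(n)$ in total. Therefore $\lib(\pi)=\Theta(n\lg\lg n)$.

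For $\ib(\pi)$, the lower bound $\ib(\pi)\ge n-1$ is immediate. For the upper bound, recall that Wilber's bound never exceeds $\mathrm{OPT}_{\mathrm{BST}}(\pi)$, and $\mathrm{OPT}_{\mathrm{BST}}(\pi)=O(n)$ by the dynamic finger bound: consecutive accesses $\pi(t),\pi(t{+}1)$ differ by $1$ except at the $N-1=O(n/\lg n)$ time-block boundaries, where the difference is at most $n$, so $\sum_{t}\lg(|\pi(t{+}1)-\pi(t)|+1)=O(n)+O(\tfrac{n}{\lg n}\lg n)=O(n)$. (Repeating the level-by-level count of the previous paragraph also shows directly that the balanced tree alone gives $\ib_P(\pi)=\sum_\ell 2^\ell(2^{m-b-\ell}-1)+\Theta(n)=\Theta(n)$, which suffices if $\ib$ and $\lib$ are read with respect to the same fixed tree.) Combining the two estimates yields $\lib(\pi)=\Theta(\lg\lg n)\cdot\ib(\pi)$.

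The crux --- and the main place careful work is needed --- is the label-string bookkeeping in the second paragraph: one must track exactly which key bits are fixed versus free inside a level-$\ell$ node and confirm that the deciding bit sits between the free high bits and the free low bits, so that each block becomes a run of length exactly $B$ (never split). One must also justify the choice $B=\Theta(\lg n)$ as the balance point: with larger blocks the $(\mathrm{L}^B\mathrm{R}^B)^\ast$ levels produce only $O(N\lg N)=o(n/\lg\lg n)$ runs and are swamped by the $\Theta(n)$ within-block cost, while with smaller blocks each run has length $o(\lg n)$ and contributes only $o(\lg\lg n)$ to $\lib(v)$. A minor point to settle up front is the convention for $\lib(\pi)$ and $\ib(\pi)$ (fixed balanced tree versus maximum over static trees); the construction is robust either way, but the proof should state the convention it uses.
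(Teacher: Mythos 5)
Your construction is exactly the one in the paper: blocks of size $\Theta(\lg n)$, arranged by bit-reversal, with keys sorted within each block. Your analysis also follows the same route (compute $\lib$ and $\ib$ separately on this permutation and compare), so this is essentially the paper's proof. The paper's version, however, is a brief sketch: it asserts that ``the IB of $\pi$ will be the same as for a list with $n/\lg n$ elements permuted according to the bit-reversal sequence,'' and that to get $\lib$ one ``multiplies by $\lg\lg n$ on all but the bottom $\lg\lg n$ levels,'' without the level-by-level bookkeeping or any justification of the $\ib(\pi)=O(n)$ upper bound. Your write-up fills those gaps: the bitwise analysis showing the deciding bit sits strictly between the free low (within-block) and free high (across-block) bits is the right way to make the $(\mathrm{L}^B\mathrm{R}^B)^\ast$ claim precise, and invoking the dynamic finger bound together with $\ib\le\mathrm{OPT}_{\mathrm{BST}}$ is a clean way to upper-bound $\ib(\pi)$ when $\ib$ is read as a maximum over all static trees (whereas the direct level-by-level count only bounds $\ib_P$ for the fixed balanced $P$). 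Your closing remark about the fixed-tree versus max-over-trees convention is a genuine loose end in the paper's definitions, and you handle it correctly by giving both arguments. In short: same approach, but yours is a complete proof where the paper gives a sketch.
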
 

\begin{proof}
	First we will need to define a particularly useful permutation. The \textit{bit-reversal permutation} $\pi_B$ on $n = 2^k$ keys is generated by taking a sorted list $[0, 1, \ldots, n]$, writing each key in binary, then reversing the bits of each key. For example, the bit-reversal permutation on 8 keys is $[0,4,2,6,1,5,3,7]$. Let $P$ be a static tree with keys of $\pi_B$ at the bottom: then querying its keys in sorted order will switch between the left and right subtree of any $v_j \in P$ on each query. This implies that $ \ib(\pi_B) = \Theta(n\lg n)$, thus showing that any BST algorithm will incur this cost when querying  $\pi_B$. 
	
	Consider the permutation $\pi$ obtained by splitting the sorted list into $n/\lg n$ segments of equal size, and then permuting those $n/\lg n$ segments according to the bit-reversal permutation $\pi_B$.
	
	The \IB{} of $\pi$ will be the same as for a list with $n/\lg n$ elements permuted according to the bit-reversal sequence---that is, $(n/\lg n) \lg (n/\lg n) = O(n)$. In $\pi$, every block is of size $\lg n$, so to calculate the \LIB{}, we multiply by $\lg \lg n$ on all but the bottom $\lg \lg n$ levels. Thus, the \LIB{} of $\pi$ is $\Theta(n \lg \lg n)$ while $\ib(\pi) = O(n)$.  
\end{proof}

Now we have shown there is no possibility of doing better than a $\lg \lg n$ separation, we show that this separation is tight. This starts with the following question: when are the \IB{} and the \LIB{} farthest apart? It follows from the convexity of the logarithm that for each vertex $v_j$ of a static tree, the \IB{} and the \LIB{} are farthest apart when $v_j$ experiences long runs of consecutive accesses to its subtrees (e.g. the list [L, L, L, R, R, R] has fairly different values for its \IB{} and \LIB{}, but the list [L, R, L, R, L, R] does not).

However, a completely sorted list $\pi_S$---translating to the longest run size possible for each vertex of $P$---has $\ib(\pi_S) = \lib(\pi_S) = \Theta(n)$. This suggests there must be some intermediate value of the block size that maximizes the difference between the two bounds. As the reader might have inferred from Lemma~\ref{lem: regandlib}, that size will turn out to be $\lg n$. The next lemma, whose proof follows directly from the convexity of the logarithm, formalizes this intuition.

\begin{lemma}\label{lem: equalsize}
	For a permutation $\pi$, let $v$ be a vertex of the corresponding static tree $P$ such that $\ib(v) = S$. Then $\lib(v)$ will differ from $\ib(v)$ by the greatest amount when each ``run'' of L or R in the labeled sequence is the same size.
\end{lemma}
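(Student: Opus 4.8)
The plan is to reduce the statement to a clean optimization question about a single vertex and then invoke convexity of the logarithm. Fix a vertex $v$ of the static tree $P$ with $\ib(v) = S$. By the definition of $\lib$, the labeled access sequence at $v$ decomposes into exactly $S+1$ maximal runs (since there are $S$ switches between $L$ and $R$), of sizes $r_1, r_2, \ldots, r_{S+1}$ with $\sum_i r_i = N_v$, where $N_v$ is the number of descendants of $v$. Then $\lib(v) = \sum_{i=1}^{S+1} \lg(r_i + 1)$. So the question is: over all ways of writing a fixed total $N_v$ as an ordered sum of $S+1$ positive integers, which choice maximizes $\sum_i \lg(r_i+1)$, and in particular, when is $\lib(v) - \ib(v) = \sum_i \lg(r_i+1) - S$ maximized? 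Since $\ib(v) = S$ is held fixed, maximizing the difference is the same as maximizing $\sum_i \lg(r_i+1)$.

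First I would note that $x \mapsto \lg(x+1)$ is concave on $(0,\infty)$. By Jensen's inequality (the finite, concave form), for any positive reals $r_1, \ldots, r_{S+1}$ summing to $N_v$ we have
\[
\frac{1}{S+1}\sum_{i=1}^{S+1} \lg(r_i + 1) \;\le\; \lg\!\left(\frac{1}{S+1}\sum_{i=1}^{S+1}(r_i+1)\right) \;=\; \lg\!\left(\frac{N_v}{S+1} + 1\right),
\]
with equality exactly when all the $r_i$ are equal. Hence $\sum_i \lg(r_i+1) \le (S+1)\lg\!\big(N_v/(S+1) + 1\big)$, and this bound is attained precisely when every run has the same size $N_v/(S+1)$. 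This immediately gives the lemma: the gap $\lib(v) - \ib(v)$ is largest exactly when the runs are all of equal size. One caveat to address in the write-up is that run sizes must be integers, so strictly the maximum over admissible sequences is attained at the integer point(s) closest to equal; I would phrase the statement (as the paper does) as the idealized "equal size" conclusion, or remark that up to rounding the extremal configuration is the balanced one, which is all that is needed for the asymptotic bound in Theorem~\ref{thm: lib}.

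The only real subtlety — and the step I expect to require the most care — is confirming that the number of runs is exactly determined by $\ib(v)$: a sequence of labels over $\{L, R\}$ with exactly $S$ switches has exactly $S+1$ maximal monochromatic runs, and the decomposition $S(v)$ in Definition~\ref{def: lib} into the \emph{smallest} number of runs is precisely this maximal-run decomposition. This ties the two measures to the same underlying partition of $N_v$, so that fixing $\ib(v) = S$ is genuinely the constraint "$S+1$ parts," letting the one-variable convexity argument go through. Everything else is a direct application of Jensen's inequality to the concave function $\lg(x+1)$, and no lengthy computation is needed.
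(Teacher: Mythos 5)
Your proof is correct and follows exactly the route the paper gestures at (the paper states only that the lemma ``follows directly from the convexity of the logarithm'' and gives no further detail); your Jensen's-inequality write-up, including the observation that $\ib(v)=S$ fixes the number of runs at $S+1$ so the only free variable is the run-size vector, simply fills in that one-line remark. The caveat about integer run sizes is a reasonable thing to note and does not affect the asymptotic use made of the lemma in Theorem~\ref{thm: lib}.
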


Now that we have established that the \IB{} and the \LIB{} differ the most when all continuous runs are the same size, we move on to ask the following question: how large do the runs of the same size have to be to further maximize this difference? The next lemma shows this fact in the following way: in the inequality below, the expression $S \lg \left(\frac{n}{S}+1 \right)$ bounds the \LIB{} of any $\pi$ such that $\ib(\pi) = S$. 
The left-hand expression $c\lg (\lg n+1)\left(S + \frac{n}{\lg n} \right)$ will directly suffice to prove Theorem~\ref{thm: lib}. The proof of the lemma draws out the fact that the two expressions are closest to each other when the size of the continuous runs is $\lg n$.

\begin{lemma}\label{lem: ibcompare}
	For a permutation $\pi$, let $v$ be a vertex of the corresponding static tree $P$ such that $\ib(v) = S$. Furthermore, let the number of leaves below vertex $v$ be $n$. Then for some constant $c$,
	\begin{align*}
		c\lg (\lg n+1)\left(S + \frac{n}{\lg n} \right) \geq S \lg \left(\frac{n}{S}+1 \right).
	\end{align*}
\end{lemma}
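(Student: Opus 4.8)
The plan is to prove the inequality by a case analysis on the size $S$ relative to $n/\lg n$, since that is the threshold (block size $\lg n$) identified as extremal. Recall that $S \lg(n/S + 1)$ is the bound on $\lib(v)$ when $\ib(v) = S$ and all runs have equal size $n/S$ (by Lemma~\ref{lem: equalsize}, this is the worst case), and we want to show the cleaner expression $c \lg(\lg n + 1)(S + n/\lg n)$ dominates it. The key observation is that $\lg(n/S+1)$ is a decreasing function of $S$, so intuitively when runs are large (small $S$), the $n/\lg n$ term must do the work, and when runs are small (large $S$), the $S$ term must do the work.

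First I would handle the case $S \geq n/\lg n$. Here the runs have size $n/S \leq \lg n$, so $\lg(n/S + 1) \leq \lg(\lg n + 1)$, which gives $S\lg(n/S+1) \leq S \lg(\lg n+1) \leq c\lg(\lg n+1)(S + n/\lg n)$ for any $c \geq 1$. This case is essentially immediate. Next I would handle the case $S < n/\lg n$. Here I want to show $S\lg(n/S+1) \leq c\lg(\lg n+1)\cdot \frac{n}{\lg n}$, i.e., $\frac{S \lg n}{n}\lg(n/S+1) \leq c\lg(\lg n+1)$. Writing $x = n/S > \lg n$, the left side is $\frac{\lg n}{x}\lg(x+1)$; since $\frac{\lg(x+1)}{x}$ is decreasing in $x$ for $x \geq 1$ (or at least bounded appropriately), and $x > \lg n$, we get $\frac{\lg n}{x}\lg(x+1) < \frac{\lg n}{\lg n}\lg(\lg n + 1) = \lg(\lg n + 1)$, so $c = 1$ suffices here too. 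Combining the two cases and taking $c$ to be a universal constant (in fact $c=1$, or a slightly larger constant to absorb rounding of $\lg n$ to an integer and the $+1$ terms) completes the proof.

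The main obstacle is making the monotonicity argument in the second case fully rigorous: strictly speaking $\frac{\lg(x+1)}{x}$ is not monotone decreasing for all $x > 0$ (it increases for small $x$), so I need to confirm that on the relevant range $x > \lg n$ — where $n$ is large enough that $\lg n$ exceeds the unimodal peak — the function is indeed decreasing, or alternatively just bound $\lg(x+1) \leq \lg(2x) = 1 + \lg x \leq 2\lg x$ for $x \geq 2$ and argue $\frac{\lg x}{x}$ is decreasing for $x \geq e$, picking up a harmless constant factor into $c$. A secondary minor point is that $S$ and $n$ are integers and $\lg n$ may not be, so one should interpret ``block size $\lg n$'' as $\lceil \lg n \rceil$ throughout and verify the inequalities still go through with the extra additive slack, which only affects $c$.
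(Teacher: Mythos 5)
Your proof is correct and uses the same threshold $S = n/\lg n$ as the paper, but your direct two-case argument is cleaner and, importantly, closes a gap. The paper proceeds by contradiction: it shows $\lg(\lg n + 1)\,S < S\lg(n/S+1)$ forces $S < n/\lg n$, and then for the remaining range $S < n/\lg n$ it simply writes the target inequality as $c\lg(\lg n+1)\cdot n/\lg n \geq (n/\lg n)\lg(\lg n+1)$ --- which is what one obtains by substituting $S = n/\lg n$ into $S\lg(n/S+1)$ --- without justifying that this substitution gives the maximum of $S\lg(n/S+1)$ over $S \leq n/\lg n$. You supply precisely the missing justification: writing $x = n/S$ and observing that $S\lg(n/S+1) = n\cdot\lg(x+1)/x$ with $\lg(x+1)/x$ decreasing in $x$, so that $S\lg(n/S+1)$ is increasing in $S$ on the relevant range. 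One small correction: the unimodality worry you raise about $\lg(x+1)/x$ does not actually arise, since this function is monotone decreasing on all of $(0,\infty)$ (set $u = x+1$; then $1 - 1/u - \ln u \leq 0$ for $u \geq 1$ by comparing at $u=1$ and differentiating). Your fallback bound $\lg(x+1) \leq 2\lg x$ for $x \geq 2$ is a perfectly good way to sidestep the calculus regardless, and your handling of the integrality of $\lg n$ via the constant $c$ is also appropriate.
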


\begin{proof}
	Assume for the sake of contradiction that 
	\begin{align*}
		c\lg(\lg n + 1)S + c\lg(\lg n+1)\frac{n}{\lg n} < S \lg \left(\frac{n}{S}+1 \right).
	\end{align*}
	This would imply that each added term is smaller than $S \lg \left(\frac{n}{S}+1 \right)$. Begin by examining the case where the first term is smaller than the term on the right:
	\begin{align*}
		\lg(\lg n +1)S < S\lg\left( \frac{n}{s}+1 \right) \\
		\implies \lg (\lg n+1) < \lg \left( \frac{n}{S}+1 \right) \\
		\implies \lg n < \frac{n}{S} \\
		\implies S < \frac{n}{\lg n}.
	\end{align*} 
	This shows that when $S \geq \frac{n}{\lg n}$, we reach a contradiction and our claim holds. Now, when $S < \frac{n}{\lg n}$, the second term in the sum dominates. When the second term dominates, the expression reads
	\begin{align*}
		c \lg(\lg n+1) \cdot \frac{n}{\lg n} \geq \frac{n}{\lg n}\lg (\lg n+1)
	\end{align*}
	which is self-evidently true for all $c \geq 1$. 
\end{proof}

Now, these two lemmas are put together to prove Theorem~\ref{thm: lib}. 

\begin{proof}[Proof of Theorem~\ref{thm: lib}]
	Let $P$ be the static tree corresponding to $\pi$, and for each vertex $v_i$ of $P$, let $S_i = \ib(v_i)$. By Lemma~\ref{lem: equalsize}, we can assume that if the number of leaves below $v_i$ is $n_i$, then $\lib(v_i) = S_i\lg \left(\frac{n_i}{S}+1 \right)$. Then $\ib(\pi) = \sum_{i=1}^{n-1} S_i$. Next, we can use the upper bound on $\lib(v_i)$ from Lemma~\ref{lem: ibcompare} to upper bound $\lib(\pi)$:
	\begin{align*}
		\lib(\pi) \leq \sum_{i=1}^{n-1} c \lg \lg n \left(S + \frac{n_i}{\lg n} \right) \\
		= c \lg \lg n \left(\ib(\pi) + \frac{1}{\lg n}\sum_{i=1}^{\lg n} 2^i \frac{n}{2^i} \right) \\
		= c \lg \lg n \left(\ib(\pi) + n \right) = O(\ib(\pi) \lg \lg n).
	\end{align*}
        \vspace{-.2in}
      \end{proof}

\section{\APMcaps}\label{apdx: apm}
In this section we present a parallel BST mergesort which sorts a permutation $\pi$ using $O(\lib(\pi))$ accesses, and the same amount of work. We refer to the algorithm as an \textit{adaptive parallel mergesort}. 

First we introduce a few basic terms and the data structure used in our mergesort. Given a BST $T$ and a key $k$, a \textbf{split} refers to returning two BSTs, one containing all keys from $T$ which are greater than $k$, and one containing all keys which are less than $k$. Given two BSTs $T_1, T_2$ such that any key in $T_1$ is greater than every key in $T_2$, \textbf{join} returns a single BST $T$ containing the union of the keys in $T_1$ and $T_2$. As previously stated, we assume keys are unique.

The tree used in our mergesort algorithm is a modified red-black tree described by Tarjan and Van Wyck in~\cite{tarjan1988triangulating}, which they call a \textit{heterogeneous finger search tree}. These trees have the useful property that a key $d$ in a heterogenous finger search tree with $n$ elements can be accessed in time $O(\log(\min(d, n-d)+1))$. This property allowed Tarjan and Van Wyck to devise fast split and join algorithms for heterogenous finger search trees; split runs in amortized time $O(\lg(\min(|T_1|, |T_2|)+1))$---that is, the logarithm of the size of the smaller tree returned. Join similarly is bounded by amortized time $O(\lg(\min(|T_1|, |T_2|)+1))$---in this case, the size of the smaller of the two trees being joined together. The worst-case complexity of split and join is $O(\log \max{|T_1|, |T_2|})$. As presented in~\cite{tarjan1988triangulating}, the heterogeneous finger search tree is not strictly a BST as it uses more than one pointer; however, work by~\cite{chalermsook2018multi} shows how it can be converted into using a single pointer with an additional constant factor loss.

\textbf{Sequential Mergesort.} As a warmup, we present a sequential merge algorithm which, when used in as the merge step of a mergesort algorithm, sorts a permutation $\pi$ in $O(\lib(\pi))$ time. Refer to Algorithm~\ref{algo: seqmerge} for the algorithm. We note that McIlroy also proposed a sequential comparison-based sorting algorithm that uses $O(\lib(\pi))$ comparisons~\cite{mcilroy1993sorting}, but his algorithm uses a merge step which requires linear time in the size of the lists being merged. 

\begin{algorithm2e}
	\caption{merge($T_1, T_2$).}
	\label{algo: seqmerge}\small
	\SetAlgoLined
	\KwIn{Two BSTs $T_1, T_2$ with disjoint keys.}
	\KwOut{A BST containing the union of the keys of $T_1$ and $T_2$.}
	$T = \emptyset$ \;
	\While {true} {
		\lIf{$T_1 =$ leaf}{
			\KwRet{join ($T, T_2$)}
		}
		\lElseIf{$T_2 = $leaf}{
			\KwRet{join ($T, T_1$)}
		}
		\Else{
			$k_1 = \min(T_1)$;
			$k_2 = \min(T_2)$\;
			\uIf{$k_1 > k_2$}{
				$t, T_2 = $ split($T_2, k_1$) ;
				$T = $ join($T, t$)\;
			}
			\Else{
				$t, T_1 = $ split($T_1$, $k_2$) ; 
				$T = $ join($T, t$) \;
			}
		}
	}
	\vspace{0.5em}
\end{algorithm2e}

\begin{lemma}\label{lem: seqmerge}
	Algorithm~\ref{algo: seqmerge} is correct and, if it breaks $T_1$ into subtrees $t_1, \ldots, t_j$ and $T_2$ into subtrees $t_{j+1}, \ldots, t_k$, then it runs in time 
	\begin{align*}
		O \left( \sum_{i=1}^k \lg (|t_i|+1) \right).
	\end{align*}
\end{lemma}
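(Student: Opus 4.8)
The plan is to prove correctness and the running-time bound separately, both by tracking the structure of Algorithm~\ref{algo: seqmerge} across iterations of its \texttt{while} loop. For correctness, I would argue by a loop invariant: at the start of each iteration, the keys of $T$ are exactly the keys already ``committed'' to the output, every key in $T$ is smaller than every key remaining in $T_1$ and in $T_2$, and the keys of $T, T_1, T_2$ together partition the union of the original two key sets. The invariant is trivially true initially ($T = \emptyset$). For the inductive step, consider the iteration where (say) $k_1 > k_2$: the call \texttt{split}$(T_2, k_1)$ separates off the block $t$ of keys of $T_2$ that are $\le k_1$ (equivalently $< k_1$, since keys are distinct and $k_1 \notin T_2$), which are precisely the next keys in sorted order among what remains, and they are all larger than everything in $T$ because $k_2 = \min(T_2)$ exceeds $\max(T)$ by the old invariant. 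Hence \texttt{join}$(T,t)$ is a legal join and the invariant is restored; the symmetric case is identical. When the loop exits because $T_1$ (or $T_2$) is a leaf, the remaining tree's keys are all larger than those of $T$, so the final \texttt{join} is legal and returns the full sorted union.

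For the running time, the key observation is that each iteration of the loop performs exactly one \texttt{split} and one \texttt{join}, and the block $t$ split off in that iteration is exactly one of the subtrees $t_i$ in the statement: the iterations partition $T_1$ into maximal runs $t_1,\dots,t_j$ and $T_2$ into maximal runs $t_{j+1},\dots,t_k$ that alternate in the merged order. By the quoted bounds from~\cite{tarjan1988triangulating}, the \texttt{split} producing $t_i$ from the current tree $T'$ costs amortized $O(\lg(\min(|t_i|, |T'| - |t_i|) + 1)) = O(\lg(|t_i|+1))$, and the subsequent \texttt{join}$(T, t_i)$ costs amortized $O(\lg(\min(|T|,|t_i|)+1)) = O(\lg(|t_i|+1))$ as well, since in both cases $|t_i|$ is (up to the $+1$) an upper bound on the smaller operand. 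Summing over all iterations and the one terminating \texttt{join} (whose cost is $O(\lg(|t_i|+1))$ for the last remaining block) gives the claimed total of $O\!\left(\sum_{i=1}^k \lg(|t_i|+1)\right)$.

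The main obstacle I anticipate is handling the amortization cleanly: the per-operation bounds from Tarjan--Van Wyck are \emph{amortized}, so I must make sure a single potential function is consistent across the interleaved sequence of splits and joins on the three evolving trees, rather than treating each call in isolation. The cleanest route is to note that the heterogeneous finger search tree admits a global potential for which each split/join costs amortized $O(\lg(\text{smaller side}+1))$, and that a bounded number of such structures coexist, so potentials simply add; then the telescoping over all iterations is immediate. A secondary (minor) point is verifying that ``block boundaries'' in the merge align with the $t_i$ decomposition — i.e. that consecutive iterations never split off keys from the same source tree twice in a row — which follows because after splitting off $t$ from $T_2$ and joining it, $\min(T_2)$ is now larger than $\min(T_1)$ (the key $k_1 = \min(T_1)$ was the threshold), forcing the next split to act on $T_1$.
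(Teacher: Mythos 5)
Your proposal is correct and follows essentially the same approach as the paper: a loop invariant (equivalently, induction over iterations) showing that $T$ stays a valid BST whose keys are all smaller than every key remaining in $T_1 \cup T_2$, combined with charging each iteration's split and join to $O(\lg(|t_i|+1))$ via the Tarjan--Van Wyck bounds. Your explicit attention to the amortization issue across the interleaved sequence of splits and joins, and to the fact that consecutive iterations must alternate between source trees, is a welcome tightening of a point the paper treats as immediate.
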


\begin{proof}
	The proof of the time bound is immediate, since a heterogeneous finger search tree has split and join cost $O(\lg(|t|+1))$ for every subtree that it breaks off from $T_1$ and $T_2$.
	
	Correctness will be shown via induction. The inductive hypothesis is that after each split and join of a new subtree, $T$ remains a valid BST, and each key of $T$ is smaller than every key in $T_1 \cup T_2$. 
	
	The base case is when $T$ is empty. Assume without loss of generality that $k_1 > k_2$. Then $\text{split}(T_2, k_1)$ returns $T_2$ and $t$, where $t$ contains only keys which are smaller than all keys in $T_2$. They are also smaller than all keys in $T_1$, since $k_1$ was the minimum key in $T_1$ and $t$ contains only keys less than $k_1$. The operation $\text{join}(T, t)$ successfully returns the BST $t$ since $T = \emptyset$. 
	
	Now, assume that after any number of split/join actions that $T$ is a valid BST and contains only keys that are less than those in $T_1 \cup T_2$. Similarly to the proof of the base case, assuming wlog that $k_1 >k_2$, we know that the split of $T_2$ returns a tree $t$ such that each key in $t$ is larger than any key in $T$, and such that each key in $t$ is smaller than any key in $T_1 \cup T_2$. Furthermore, since $t$ contains only keys larger than those in $T$, $\text{join}(T, t)$ returns a valid BST.  
\end{proof}

Next, we show that the mergesort algorithm which uses Algorithm~\ref{algo: seqmerge} as its merge routine is optimal for the \LIB.

\begin{lemma}\label{lem: seqmergesort}
	When Algorithm~\ref{algo: seqmerge} is used as the merge step in a mergesort algorithm, the mergesort will sort a sequence $\pi$ in time $O(\lib(\pi))$.
\end{lemma}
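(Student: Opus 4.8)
The plan is to bound the total cost of the mergesort by $O(\lib(\pi))$ by charging each merge to the corresponding vertex of the static tree $P$ used in the definition of the \LIB. First I would fix the static tree $P$ whose leaves are the keys of $\pi$ in the order determined by how the mergesort recursively splits the input (i.e.\ $P$ encodes the recursion tree of the mergesort, with each internal node $v$ corresponding to one merge of the two sorted halves obtained from $v$'s children). Note the \LIB\ as defined in Definition~\ref{def: lib} is a maximum-style quantity, but for the upper bound it suffices to exhibit \emph{one} static tree, namely this recursion tree, for which $\sum_{v \in P}\lib(v)$ bounds the cost; since $\lib(\pi)$ is the max over static trees, this only helps us.

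Next I would analyze a single merge at an internal node $v$. By Lemma~\ref{lem: seqmerge}, merging $T_1$ and $T_2$ costs $O\!\left(\sum_{i=1}^k \lg(|t_i|+1)\right)$, where $t_1,\dots,t_j$ are the contiguous subtrees (blocks) peeled off from $T_1$ and $t_{j+1},\dots,t_k$ from $T_2$. The key observation is that these blocks are exactly the maximal runs of consecutive keys that lie entirely on one side (left or right) of $v$ when the descendants of $v$ are listed in sorted order: each \texttt{split} in Algorithm~\ref{algo: seqmerge} peels off a maximal contiguous run of keys from the currently-minimum tree, which is precisely a maximal monochromatic run in the $L/R$ labeling of $v$. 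Hence $\{|t_i|+1\}$ is exactly the multiset $\{|r_i|+1 : r_i \in S(v)\}$ from Definition~\ref{def: lib} (up to the constant from the heterogeneous-finger-search-tree conversion and from split/join both costing $O(\lg(|t|+1))$), so the cost of the merge at $v$ is $O\!\left(\sum_{r_i \in S(v)} \lg(|r_i|+1)\right) = O(\lib(v))$.

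Finally I would sum over all internal nodes $v$ of $P$: the total mergesort cost is $\sum_v O(\lib(v)) = O(\lib(\pi))$, using the definition $\lib(\pi) = \sum_{v} \lib(v)$ for this particular $P$ and then the maximality of $\lib(\pi)$ over static trees. Correctness of each merge is already given by Lemma~\ref{lem: seqmerge}, and correctness of the overall mergesort is standard.

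The main obstacle I expect is the second step: arguing carefully that the blocks peeled off by the greedy split-based merge coincide with the maximal monochromatic runs in the $L/R$ labeling of $v$, and in particular that $S(v)$ (the \emph{smallest} decomposition into monochromatic runs) is what the algorithm achieves rather than some finer decomposition. This requires checking that \texttt{split}$(T_i, k)$ returns \emph{all} keys of $T_i$ below the current minimum of the other tree in one shot (so consecutive same-side keys are never split across two iterations), which follows from the split semantics but should be stated explicitly. A secondary subtlety is bookkeeping the constant factors from the finger-search-tree-to-BST conversion of~\cite{chalermsook2018multi} and from split and join each being charged $O(\lg(\min(\cdot)+1))$; these only affect the hidden constant.
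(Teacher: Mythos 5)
Your proof is correct and takes essentially the same approach as the paper: choose the static tree to mirror the mergesort's recursion structure so that each merge corresponds to a vertex $v$, argue via Lemma~\ref{lem: seqmerge} that the blocks peeled off by the greedy split/join merge are exactly the maximal monochromatic runs $S(v)$ so the merge costs $O(\lib(v))$, and sum over vertices. If anything you are slightly more explicit than the paper's proof about why the blocks coincide with the maximal runs and about the choice of static tree; the one small caveat is that the paper's Definition~\ref{def: lib} does not literally state that $\lib(\pi)$ is a maximum over static trees (it says ``fix a static binary tree $P$''), so strictly speaking you should argue that the recursion tree is the tree with respect to which $\lib$ is evaluated, as the paper implicitly does, rather than appeal to maximality.
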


\begin{proof}
	When merging two trees $T_1, T_2$, consider any permutation $\pi$ such that the keys of $T_1$ correspond to the keys in the first half of $\pi$, and the keys of $T_2$ correspond to the keys in the second half of $\pi$. 
	
	Consider the static tree $P_\pi$ used to calculate the \LIB{} of $\pi$. The root vertex $v$ then contains all the keys in $T_1$ in its left subtree, and all the keys in $T_2$ in its right subtree. Thus
	\begin{align*}
		\lib(v) = \sum_{i=1}^k \lg (|t_i|+1)	
	\end{align*}
	since the $t_i$'s correspond to switching between $T_1$ and $T_2$ when querying the keys in sorted order. 
	
	Finally, observe that each non-leaf vertex of $P_\pi$ corresponds to a merge that a mergesort algorithm would carry out, and the result follows.
\end{proof}

The natural parallel equivalent of the merge presented in Algorithm~\ref{algo: seqmerge} is as follows: starting with two trees, split each tree using the other tree's root; then, recurse in parallel to merge the two left halves and the two right halves, respectively, joining the two at the end. This idea was presented by Blelloch et al.~\cite{blelloch2016just}, and is shown here in Algorithm~\ref{algo: union}. This algorithm, however, does not meet the \LIB{} even if we use heterogeneous finger search trees for the split and join. We therefore modify the algorithm as is shown in Algorithm~\ref{algo: merge} and illustrated in Figure~\ref{fig: merge}, which follows the same idea with some small modifications.   In addition to splitting the second tree $T_2$ into $L_2$ and $R_2$ based on the root of the first tree ($T_1$), it then splits $T_1$ by the maximum value of $L_2$ and the minimum value of $R_2$ to effectively break $T_1$ into three parts. The middle part need not be split recursively since it falls between two elements of $T_2$. This avoids redundant splits.

\begin{minipage}{.46\textwidth}
	\begin{algorithm2e}[H]
		\caption{union($T_1, T_2$).\protect\\
			Blelloch et al.'s union algorithm. Here, the function \textit{expose} refers to returning the root and its right and left subtrees.}
		\label{algo: union}\small
		\SetKwBlock{ParDo}{do in parallel}{end}
		\SetAlgoLined
		\KwIn{Two BSTs $T_1, T_2$ with disjoint keys.}
		\KwOut{A BST containing the union of the keys of $T_1$ and $T_2$}
		\lIf{$T_1 = $ Leaf}{
			\KwRet{$T_2$}
		}
		\lElseIf{$T_2 = \text{Leaf}$}{
			\KwRet{$T_1$}
		}
		\Else{
			$L_1, k_2, R_1 = $ expose($T_1$)\;
			$L_2, R_2 = $ split($T_2, k$) \;
			\ParDo{
				$T_L = $ union($L_1, L_2$)\; 
				$T_R = $ union($R_1, R_2$) \;
			}
		}	
		\KwRet{join($T_L, k_2, T_R$)}\;
		\vspace{0.5em}
	\end{algorithm2e}
\end{minipage} \hfill
\begin{minipage}{.46\textwidth}
	\begin{algorithm2e}[H]
		\caption{mergeHT($T_1,T_2)$\protect\\
			Pseudocode for the merge step of our mergesort.}
		\label{algo: merge}\small
		\SetKwBlock{ParDo}{do in parallel}{end}
		\SetAlgoLined
		\KwIn{Two BSTs $T_1, T_2$}
		\KwOut{A BST containing the union of the keys of $T_1$ and $T_2$}
		\lIf{$T_1 = $ Leaf}{
			\KwRet{$T_2$}
		}
		\lElseIf{$T_2 = \text{Leaf}$}{
			\KwRet{$T_1$}
		}
		\Else{
			$k$ = root($T_1$)\;\label{line:root}
			$L_2, R_2 = $ split($T_2, k$) \;
			$k_1 = $ max($L_2$) ; 
			$k_2 = $ min($R_2$) \; 
			$L_1, I= $ split($T_1, k_1$) \; 
			$M, R_1 = $ split($I$, $k_2$) \; 
			\ParDo{
				$T_L = $ merge($L_1, L_2$) \; 
				$T_R = $ merge($R_1, R_2$) \;
			}
		}
		
		\KwRet{join(join($T_L, M$), $T_R$)}\;
		\vspace{0.5em}
	\end{algorithm2e}
\end{minipage}

Unlike the sequential algorithm, it is not immediate that even our modified algorithm's work is bounded by the \LIB, since it performs a different set of splits and joins than the sequential version. We will show that this different sequence of splits and joins also performs within the \LIB, culminating in the following theorem:

\thmsort

\begin{figure*}[t!]
	\vspace{-.5em}
	\begin{center}
		\includegraphics[width=.30\columnwidth]{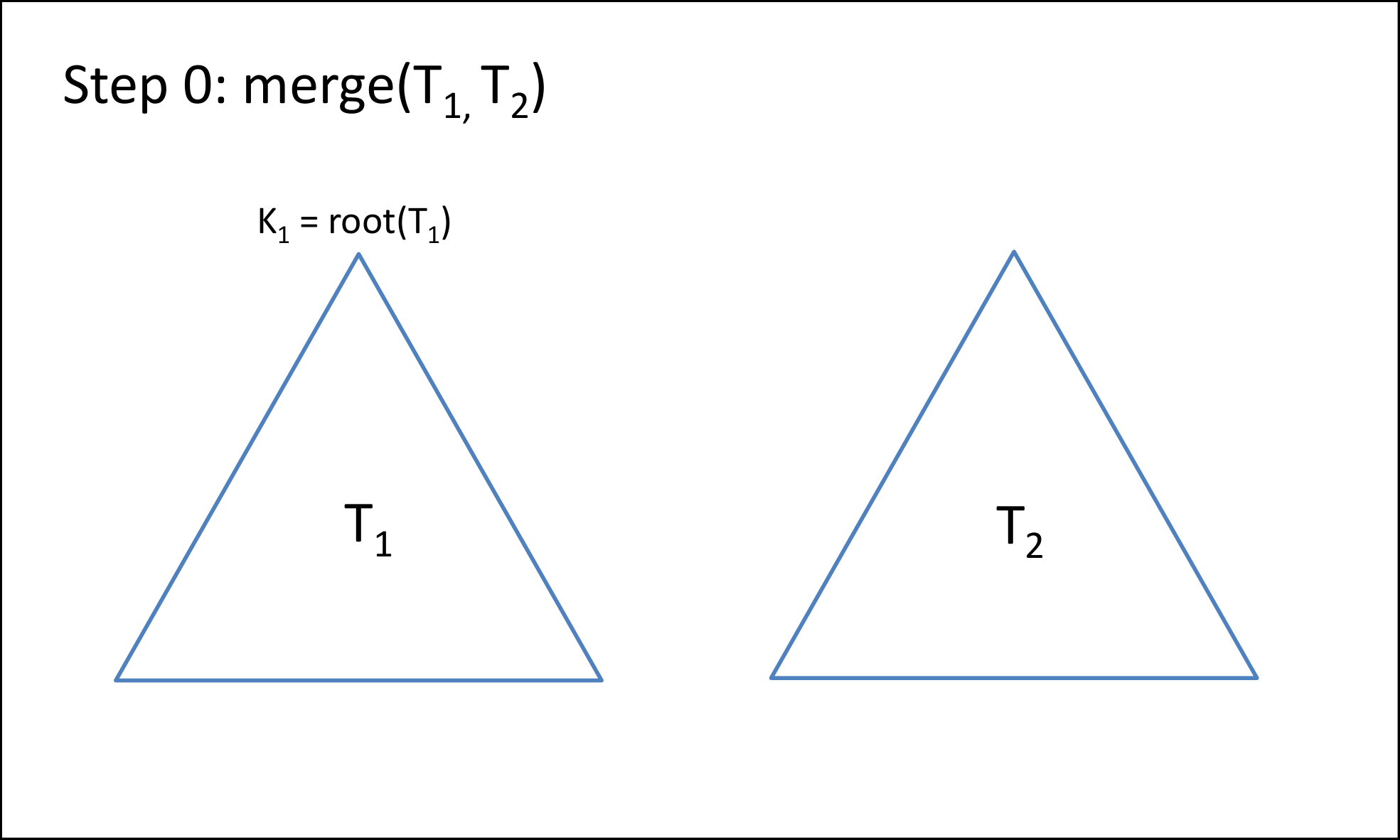}
		\includegraphics[width=.30\columnwidth]{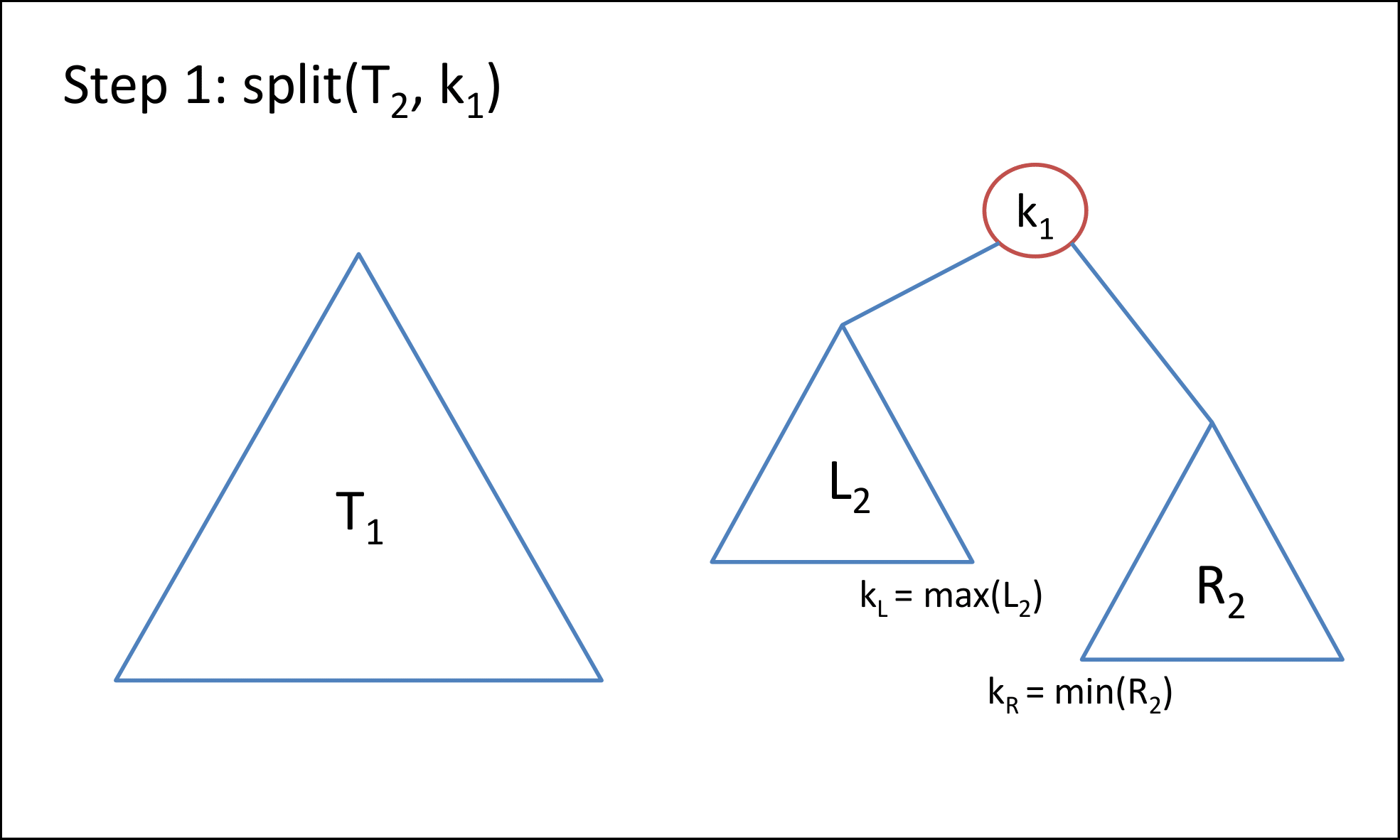}
		\includegraphics[width=.30\columnwidth]{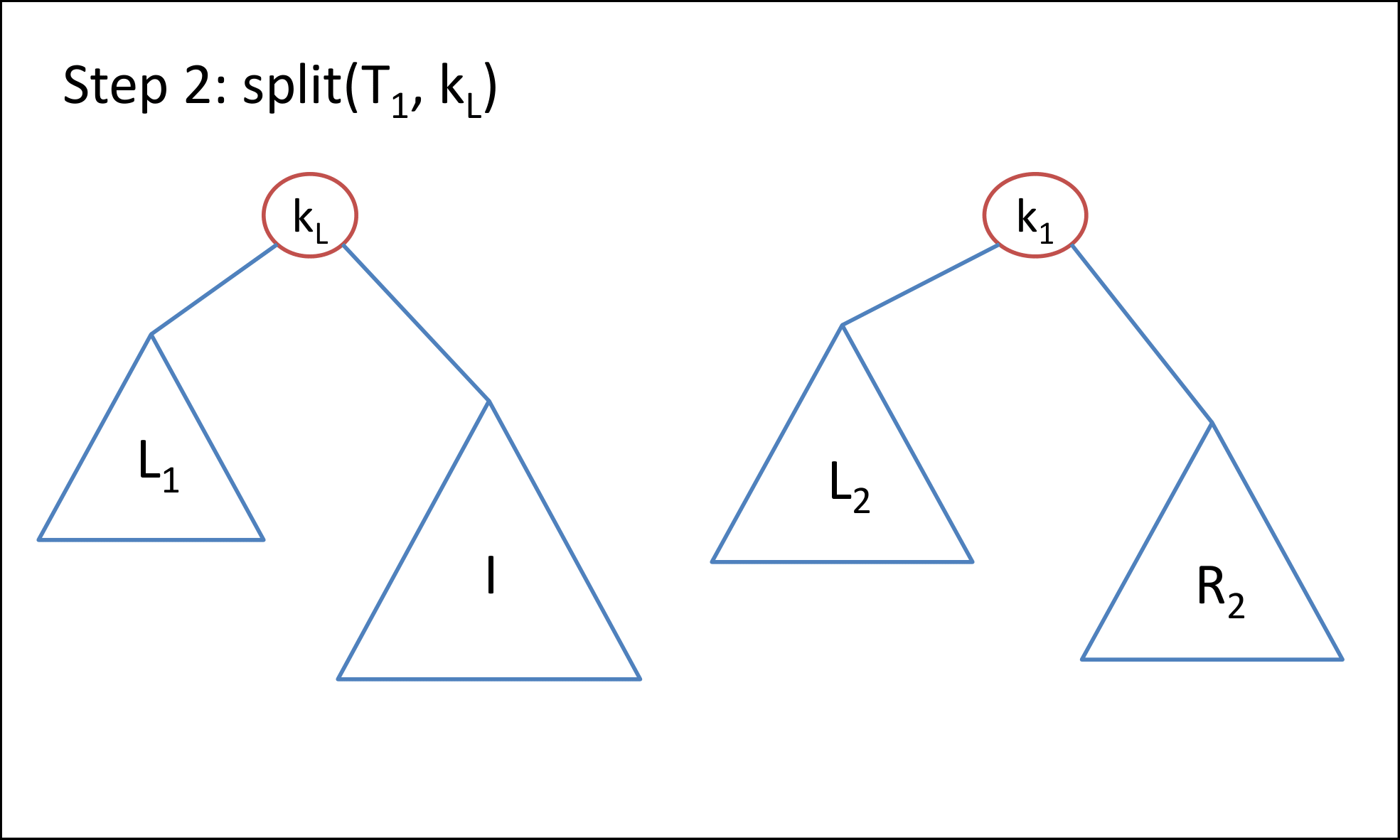}
		\includegraphics[width=.30\columnwidth]{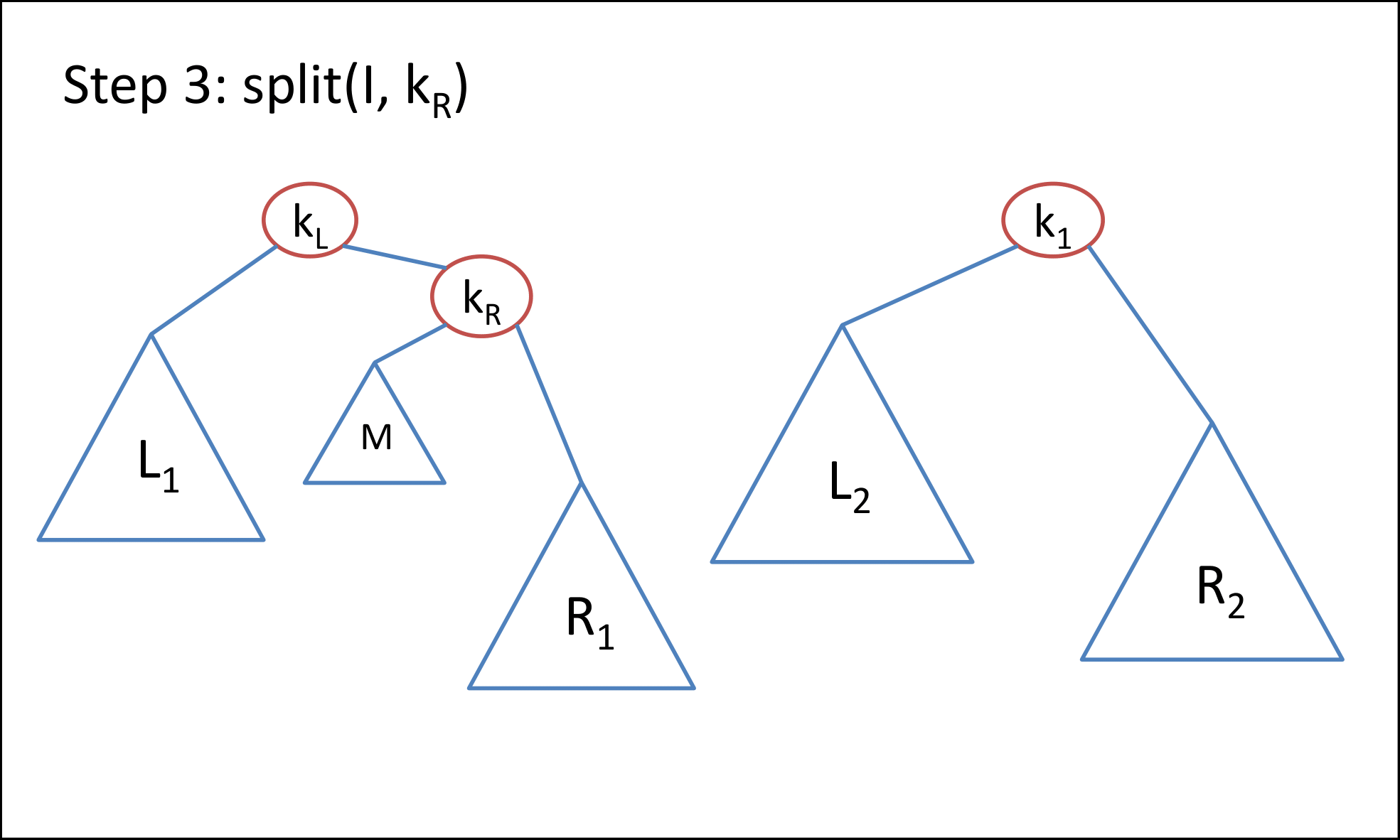}
		\includegraphics[width=.30\columnwidth]{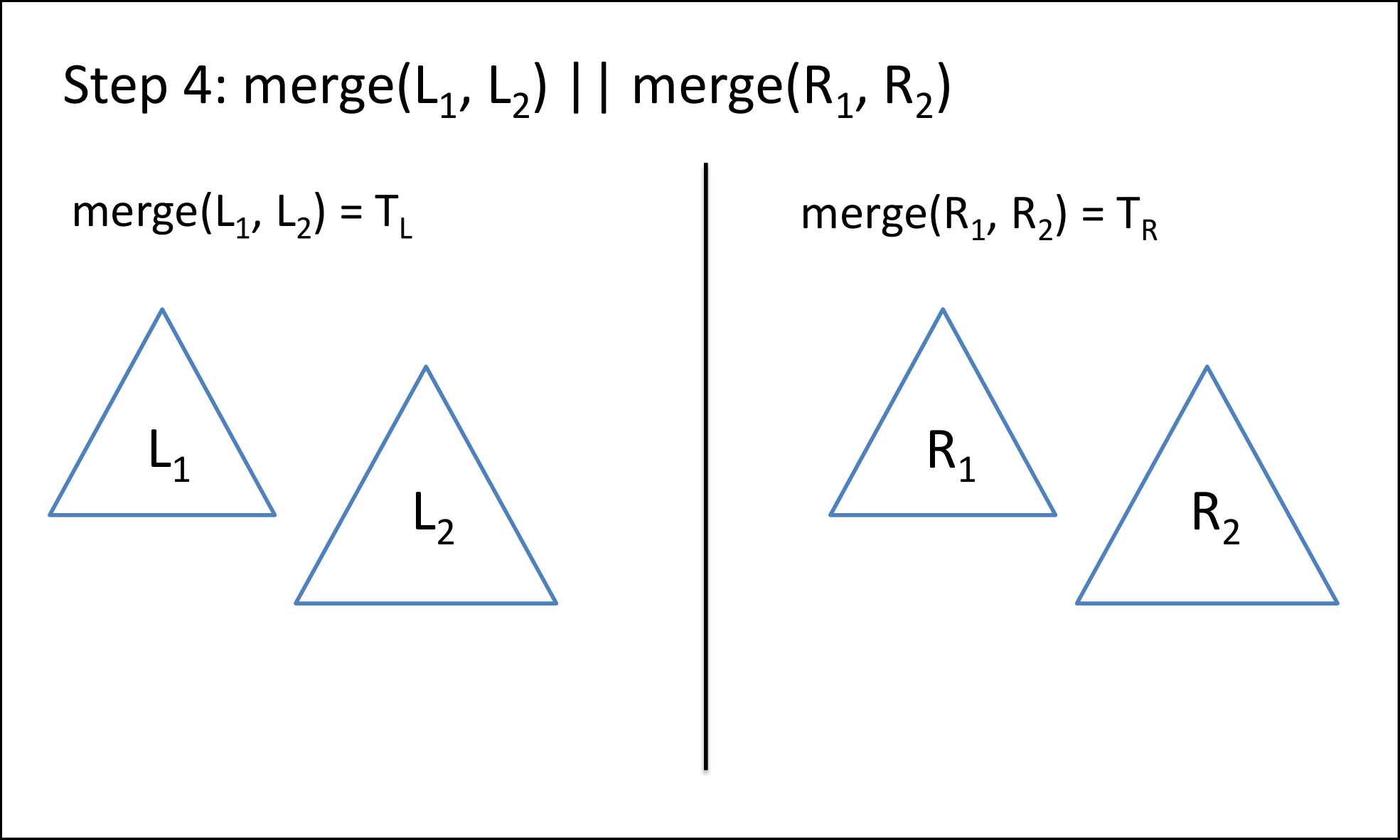}
		\includegraphics[width=.30\columnwidth]{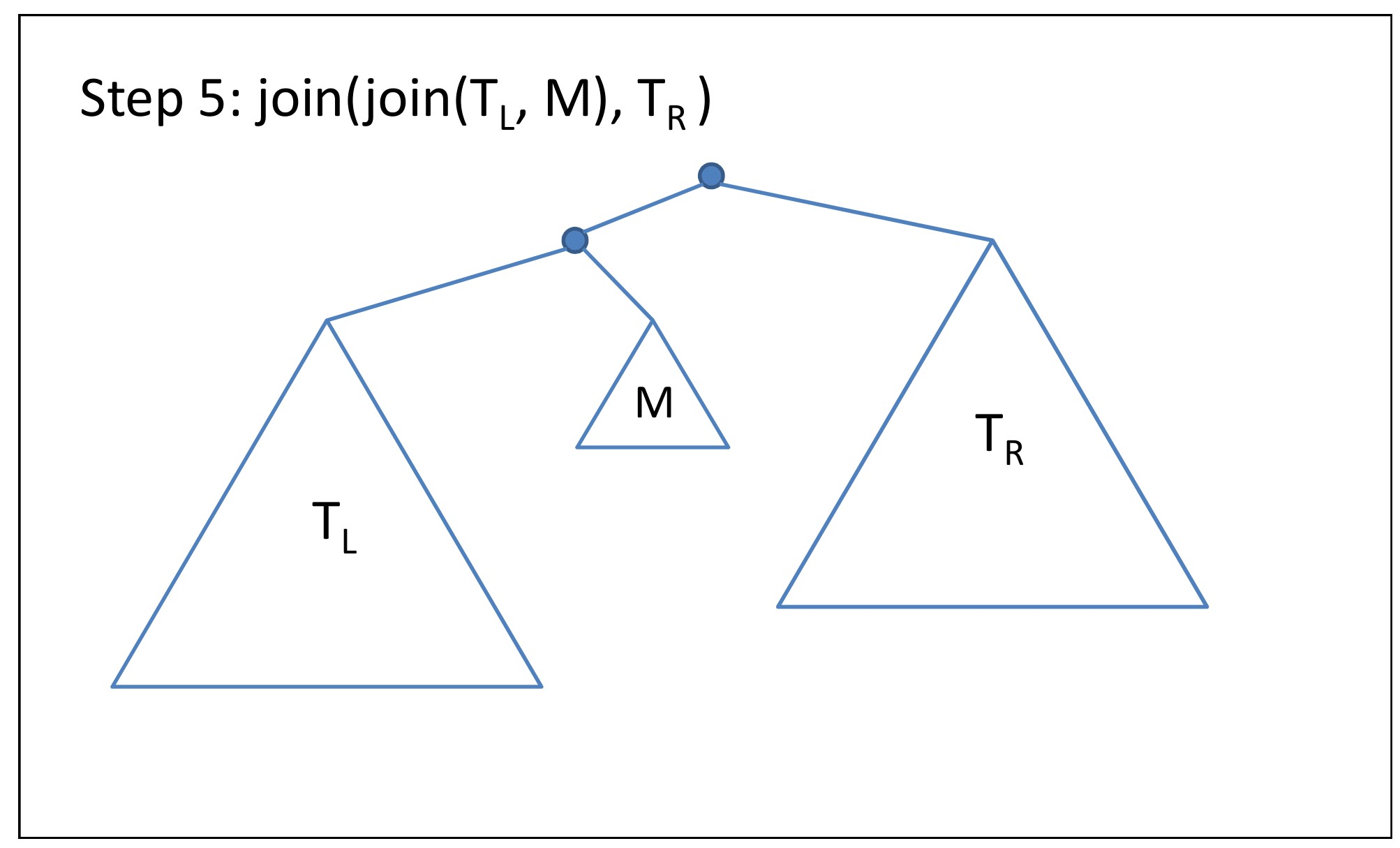}
	\end{center}
	\caption{\small One round of our recursive merge algorithm. The nodes
		shown in red are the nodes used to split a tree; the small blue
		nodes denote merges.}
	\label{fig: merge}
\end{figure*}

First, we introduce a useful lemma which shows that the work performed to split a heterogeneous finger search tree into $k$ smaller trees is independent of the order in which the splits are carried out.

\begin{lemma}\label{lem: mergesort}
	Let $T$ be a heterogeneous finger search tree and consider splitting $T$ into $k$ smaller trees $t_1, \ldots, t_k$. Then the work required to split $T$ is $O\left(\sum_{i=1}^k \lg(|t_i|+1) \right)$ regardless of the order in which the splits are performed.
\end{lemma}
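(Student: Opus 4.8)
The plan is to establish the bound with a potential-function argument on forests of heterogeneous finger search trees; this makes the order-independence automatic, since a telescoping sum does not depend on the order of its terms. To a forest $\mathcal{F}$ of trees assign the potential $\Phi(\mathcal{F}) = c' \sum_{S \in \mathcal{F}} \lg(|S|+1)$ for a constant $c'$ fixed below. A single split of a tree of size $s$ into two trees of sizes $a$ and $b$ with $a+b=s$ has amortized cost at most $c\,\lg(\min(a,b)+1)$ by the stated amortized split bound $O(\lg(\min(|T_1|,|T_2|)+1))$. I would pick $c'$ so that this amortized cost is at most the increase in potential $\Delta\Phi = c'(\lg(a+1)+\lg(b+1)-\lg(s+1))$ produced by the split. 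Summing over the $k-1$ splits that turn $T$ into $t_1,\dots,t_k$, the total cost is then at most $\Phi(\{t_1,\dots,t_k\}) - \Phi(\{T\})$, and since $|T| = \sum_i|t_i|$ together with the elementary inequality $1+\sum_i|t_i| \le \prod_i(1+|t_i|)$ gives $\lg(|T|+1) \le \sum_i\lg(|t_i|+1)$, the difference is $O(\sum_i\lg(|t_i|+1))$.

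The only genuine computation is the per-split inequality $c\,\lg(\min(a,b)+1) \le c'(\lg(a+1)+\lg(b+1)-\lg(a+b+1))$ for all integers $1 \le a \le b$. Rewriting the right-hand bracket as $\lg(1 + ab/(a+b+1))$ and using $2b \ge a+1$ (valid since $b \ge a \ge 1$) to get $ab/(a+b+1) \ge a/3$, it suffices that $c'\lg(1+a/3) \ge c\,\lg(a+1)$; the ratio $\lg(a+1)/\lg(1+a/3)$ is decreasing in $a$ and is largest at $a=1$, where it equals $1/\lg(4/3) < 2.5$, so $c' = 3c$ works. It is essential here that no intermediate split produces an empty piece: the $k$ final pieces are nonempty, consecutive runs of keys, so every split separates two nonempty groups and $a \ge 1$ always holds; this is exactly the slack that keeps $\lg(a+1)+\lg(b+1)-\lg(a+b+1)$ bounded away from $0$ and lets it absorb the split's cost.

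I expect the only delicate point, rather than a true obstacle, to be the bookkeeping of the amortization. The stated split bound is itself amortized against an internal potential $\psi$ of the finger search tree, so the telescoping above bounds the amortized work of the whole split sequence, and passing to worst-case work adds a single term $\psi(T)$ for the initial tree. I would handle this either by stating the lemma in terms of amortized work---which is all the mergesort analysis requires, since the finger-tree potential is carried coherently through the entire sequence of merges---or by noting that $\psi(T) = O(\lg|T|)$ for the trees our algorithm manipulates and $O(\lg|T|) = O(\sum_i\lg(|t_i|+1))$ once more by the product inequality. Finally, I would spell out why a naive induction on the binary tree of splits does not work: a single piece $t_i$ can lie on the minority side of many splits, so its term $\lg(|t_i|+1)$ would be charged once per such split, and it is precisely the surplus captured by $\Phi$ that lets us pay for all of them simultaneously.
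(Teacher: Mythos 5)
Your proof is correct and takes essentially the same approach as the paper's: both charge each split against the increase of a potential of the form $\sum_i \lg(|t_i|+c)$ over the current forest of pieces (the paper phrases this as an induction on the binary tree of splits, you as an explicit telescoping sum), and both close the per-split inequality using that the smaller side of every split has size at least $1$. Your observation that the $O(\lg(\min+1))$ split bound is amortized, and your proposed fix, addresses a genuine subtlety that the paper's proof leaves implicit.
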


\begin{proof}
	The proof proceeds by induction on the size. Let $|T|=n$, and make an inductive hypothesis that the total split work of $T$ is
	\begin{align*}
		\left( \sum_{i=1}^k \lg (|t_k|+2) \right) - (\lg n +2).
	\end{align*}
	Begin by verifying the base case: splitting a BST of size 1 should have a cost of zero:
	\begin{align*}
		\lg(1)+2 - (\lg(1)+2) = 0.
	\end{align*}
	Now, for the inductive step, assume we are splitting $T$ into two parts of size $\alpha n$ and $(1-\alpha)n$ for $\alpha \in [0,1]$, and that the first part consists of the smaller trees $t_1, \ldots, t_j$ while the second part consists of the trees $t_{j+1}, \ldots, t_k$. Then our inductive step requires that:
	\begin{align*}
		\left(\sum_{i=1}^j \lg |t_i| - \lg(\alpha n)-2 \right) + \left(\sum_{i=j+1}^k \lg |t_i| - \lg((1-\alpha))n-2 \right) \\ +  \lg(\min(\alpha n, (1-\alpha)n))+1 \leq \sum_{i=1}^k \lg |t_i| - \lg n -2 
	\end{align*}
	Assume without loss of generality that the size of the first subtree is smaller than the second, i.e. $\alpha \in [0, \frac{1}{2}]$. Then canceling on both sides and substituting in the minimum value gives:
	\begin{align*}
		-\lg (\alpha n) - \lg ((1-\alpha)n)-4+\lg \alpha n+1 \leq -\lg n -2 \\
		\lg ((1-\alpha)n)+1 \geq \lg n.
	\end{align*}
	Finally, since $1-\alpha \in [\frac{1}{2}, 1]$, in the case where the right-hand expression is smallest and $\alpha=\frac{1}{2}$, $\lg (\frac{1}{2}n)+1 = \lg n$, so the inequality is both true and tight in the worst case. 
	
\end{proof}

This lemma allows us to show the correctness and the work bounds for Algorithm~\ref{algo: merge}.

\begin{lemma}\label{lem: mergecorrect}
	Algorithm~\ref{algo: merge} is correct, and if merging $T_1$ and $T_2$ requires $T_1$ and $T_2$ to be split into $k$ trees $t_1, \ldots, t_k$ and $j$ smaller trees $t_1, \ldots, t_j$ respectively, then Algorithm~\ref{algo: merge} performs $O(\sum_{i=1}^k \lg|t_i+1| + \sum_{i=1}^j \lg |t_i+1|)$ work. 
\end{lemma}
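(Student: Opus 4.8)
The plan is to prove Lemma~\ref{lem: mergecorrect} in two parts: correctness and the work bound.

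\textbf{Correctness.} First I would argue correctness by induction on $|T_1| + |T_2|$. The base cases are when either tree is a leaf, which are handled directly. For the inductive step, let $k = \text{root}(T_1)$, so splitting $T_2$ on $k$ gives $L_2$ (keys $< k$) and $R_2$ (keys $> k$). Setting $k_1 = \max(L_2)$ and $k_2 = \min(R_2)$, splitting $T_1$ on $k_1$ and then on $k_2$ breaks $T_1$ into $L_1$ (keys $< k_1$), $M$ (keys in $[k_1, k_2]$, which includes $k$ and possibly other keys of $T_1$ strictly between $k_1$ and $k_2$), and $R_1$ (keys $> k_2$). The key observations are: (i) every key in $M$ lies strictly between $k_1 \in L_2$ and $k_2 \in R_2$ so no key of $T_2$ interleaves with $M$; (ii) every key in $L_1 \cup L_2$ is less than every key in $M$, which is less than every key in $R_1 \cup R_2$. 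By the inductive hypothesis $T_L = \text{merge}(L_1, L_2)$ and $T_R = \text{merge}(R_1, R_2)$ are valid BSTs holding the respective unions. Then $\text{join}(\text{join}(T_L, M), T_R)$ is well-defined because the key ranges are properly ordered, and it returns the union of all keys. One subtlety to check: if $L_2$ or $R_2$ is empty then $k_1$ or $k_2$ is undefined; I would handle this by treating the corresponding split of $T_1$ as trivial (e.g., if $L_2$ is empty, $L_1$ is empty and the recursion on the left side returns a leaf), so the algorithm degrades gracefully.

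\textbf{Work bound.} The main work is in the splits; the joins are charged symmetrically (each join reverses a split of comparable cost, using the heterogeneous finger search tree bound $O(\lg(\min(\cdot,\cdot)+1))$, so it suffices to bound split work). Here is where Lemma~\ref{lem: mergesort} does the heavy lifting. Across the whole recursion, $T_1$ is repeatedly split; I claim the collection of all the pieces that $T_1$ is ever split into (at the leaves of the recursion) is exactly the multiset $t_1, \ldots, t_k$ of final subtrees of $T_1$, and likewise for $T_2$. Crucially, every split performed during the recursion is a split of (a descendant piece of) $T_1$ or of $T_2$ at a key determined by the other tree, and these splits are nested, so the total split work on $T_1$ equals the cost of splitting $T_1$ into $t_1, \ldots, t_k$ performed in \emph{some} order. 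By Lemma~\ref{lem: mergesort} this cost is $O(\sum_{i=1}^k \lg(|t_i|+1))$ regardless of order, and symmetrically the split work on $T_2$ is $O(\sum_{i=1}^j \lg(|t_i|+1))$. Adding in the (symmetric) join costs gives the claimed $O(\sum_{i=1}^k \lg(|t_i|+1) + \sum_{i=1}^j \lg(|t_i|+1))$ bound.

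\textbf{Main obstacle.} The delicate point is justifying that the pieces that the mergeHT recursion carves $T_1$ and $T_2$ into are \emph{precisely} the blocks $t_1, \ldots, t_k$ and $t_1, \ldots, t_j$ from the interleaving of the sorted merge — in particular that the extra ``middle'' split (breaking $T_1$ into three parts rather than two) does not create spurious extra pieces or redundant split work. I would argue that $M$ is exactly a maximal run of consecutive keys of $T_1$ that fall between two consecutive keys of $T_2$ in sorted order, so it is never split again and corresponds to exactly one block; and that the recursion on $(L_1,L_2)$ and $(R_1,R_2)$ partitions the remaining blocks without overlap. This requires showing the three-way split is a refinement that matches the block structure exactly, which is the crux of why the modified algorithm (unlike Algorithm~\ref{algo: union}) avoids redundant splits and meets the \LIB. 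Once this structural correspondence is nailed down, invoking Lemma~\ref{lem: mergesort} is routine.
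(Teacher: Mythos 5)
Your proof is correct and takes essentially the same approach as the paper's: correctness by induction on the key-range decomposition (the paper simply appeals to the analogy with Algorithm~\ref{algo: union}), and the work bound by reducing to Lemma~\ref{lem: mergesort} together with a symmetric charge for the joins. If anything you are more careful than the paper on the one delicate step --- that the recursion's split points fall exactly on block boundaries, so the leaf pieces of the recursion are precisely $t_1,\ldots,t_k$ and $t_1,\ldots,t_j$ --- a fact the paper's terse proof implicitly assumes when invoking Lemma~\ref{lem: mergesort}, whereas you correctly identify it as the crux and sketch why it holds ($M$ is a single maximal block between the adjacent $T_2$-keys $k_1,k_2$, the split of $T_2$ at $k$ lands in the gap between those keys, and the left/right recursions operate on a key-prefix and key-suffix so blocks restrict cleanly).
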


\begin{proof}
	Correctness follows from the fact that by the assumptions on the input, the interval $I$ contains all keys between those of $L_1, L_2$ and $R_1, R_2$, and from the correctness of Algorithm~\ref{algo: union}. 
	
	The work bound comes from Lemma~\ref{lem: mergesort}, which shows that it does not matter in what order the splits are performed, and therefore they are bounded by the sequential work. Since each join is the same cost as a split, the proof of Lemma~\ref{lem: mergesort} also suffices to show that the join work is bounded by the split work, so the work bound follows.
\end{proof}

Note that the position of the ``root'' in the tree $T_1$ as selected in line~\ref{line:root} does not matter for the purpose of the work bounds.   Picking the first element, for example, would lead to an algorithm that is very similar to the sequential algorithm.     However for the purpose of parallelism it is important that the root in near the middle---i.e., cuts off at least a constant fraction of $T_1$'s keys from each side.   This could be achieved by finding the median element, and the cost would be within bounds.

Our last lemma concerns the span of the parallel mergesort algorithm.

\begin{lemma}\label{lem: mergespan}
	The span of the parallel adaptive mergesort is $O(\lg^3 n)$. 
\end{lemma}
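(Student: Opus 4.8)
The plan is to bound the span of the whole mergesort by multiplying the recursion depth of the mergesort itself by the span of a single merge of two trees of size at most $n$. The mergesort recursion tree has depth $O(\lg n)$ (each split of the input sequence halves it, or more precisely each part has size at least $1$ and we may assume balanced splits when we are free to choose them, as we are here), and all merges at a given level of that recursion run in parallel. So it suffices to show that one call to \texttt{mergeHT} on inputs of total size at most $n$ has span $O(\lg^2 n)$; the lemma then follows since $O(\lg n)\cdot O(\lg^2 n) = O(\lg^3 n)$.

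For a single \texttt{mergeHT} call, I would first observe that each invocation does $O(1)$ calls to \texttt{split}, \texttt{join}, \texttt{max}, \texttt{min}, and \texttt{root}, and that on a heterogeneous finger search tree each of these has worst-case cost (hence span) $O(\lg n)$; so the non-recursive work of one node of the \texttt{mergeHT} recursion has span $O(\lg n)$. Next I would bound the depth of the \texttt{mergeHT} recursion. This is where the remark following Lemma~\ref{lem: mergecorrect} is used: the splitting key $k$ is chosen to be (near) the median of $T_1$, so $L_1$ and $R_1$ each contain at most a constant fraction of the keys of $T_1$. Hence along any root-to-leaf path in the \texttt{mergeHT} recursion, the size of the ``first'' tree drops by a constant factor at every step, so after $O(\lg n)$ steps that tree is a leaf and the recursion bottoms out. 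Therefore the \texttt{mergeHT} recursion has depth $O(\lg n)$, and since its nodes along any path are sequentially dependent, its span is $O(\lg n)\cdot O(\lg n) = O(\lg^2 n)$. Combining with the mergesort recursion depth of $O(\lg n)$ gives total span $O(\lg^3 n)$.

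The main obstacle is the bound on the depth of the \texttt{mergeHT} recursion: a naive argument only controls the size of $T_1$, not $T_2$, across recursive calls, and in the recursive call \texttt{merge}$(R_1, R_2)$ the roles are not symmetric, so one must be careful that it is always the size of the tree whose root is used for splitting that shrinks geometrically, and that this single shrinking quantity is enough to force termination in $O(\lg n)$ levels (it is, since the recursion stops as soon as either argument is a leaf). A secondary point to check is that the worst-case — not amortized — cost of \texttt{split} and \texttt{join} on the heterogeneous finger search tree is $O(\lg n)$, as quoted from~\cite{tarjan1988triangulating}, since span must be charged against worst-case and not amortized cost; this is already noted in the text. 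One should also confirm that finding the (approximate) median of $T_1$ used as the root in line~\ref{line:root} costs only $O(\lg n)$ span, which holds for a balanced tree.
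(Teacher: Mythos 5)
Your decomposition — span of mergesort equals (mergesort recursion depth) $\times$ (span of one \texttt{mergeHT} call), with the first factor $O(\lg n)$ and the second $O(\lg^2 n)$ — is exactly the paper's. The difference is in how the $O(\lg^2 n)$ bound on a single merge is obtained. The paper simply observes that \texttt{mergeHT} has the same recursive shape as the union algorithm of Blelloch et al.~\cite{blelloch2016just} and that split and join on the heterogeneous finger search tree have the same $O(\lg n)$ worst-case guarantees as in that paper, and then cites the $O(\lg^2 n)$ span bound proved there. You instead reprove that bound directly: the splitting key chosen in line~\ref{line:root} is (near) the median of $T_1$, so the first argument shrinks by a constant factor per level; the recursion terminates once either argument is a leaf, so the recursion depth is $O(\lg n)$; and each level contributes $O(\lg n)$ span from a constant number of worst-case-$O(\lg n)$ split/join/min/max/median operations. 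Your argument is sound, and you correctly flag the two points that need care — that only $T_1$, not $T_2$, is controlled across levels (which is enough because the recursion stops when either argument is a leaf), and that span must be charged against worst-case rather than amortized split/join cost. The tradeoff is simply that the paper's version is shorter by delegating to the cited result, while yours is self-contained and makes explicit where the median choice enters, which the paper relegates to the remark preceding the lemma.
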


\begin{proof}
	Our merge has the same span as Blelloch et al.'s union algorithm since split and join have the same worst-case guarantees; they proved in~\cite{blelloch2016just} that the span of the merge is $O(\lg^2 n)$ when merging two trees of size $n$. Since parallel mergesort has span $O(\lg n)$, the total span is $O(\lg^3 n)$. 
	We only require binary forking since the only parallel calls are the two recursive calls inside merge, and the two recursive calls 
	inside mergesort.
\end{proof}

Now, we can put all these results together to prove our theorem.

\begin{proof}[Proof of Theorem~\ref{thm: mergesort}]
	Lemma~\ref{lem: mergesort} shows that the merge step in our parallel mergesort performs the same amount of work as the mergestep in the sequential mergesort from Lemma~\ref{lem: seqmergesort}; Lemma~\ref{lem: mergecorrect} shows correctness. Lemma~\ref{lem: mergespan} shows the span.
\end{proof}

A natural corollary of Theorem~\ref{thm: mergesort} and Theorem~\ref{thm: offlinealg} now follows. Notably, this corollary shows that the \LIB{} is in fact an upper bound in the BST model.

\corlib

\begin{proof}[Proof of Corollary~\ref{cor: offlinelib}]
	Theorem~\ref{thm: mergesort} gives a BST mergesort that has the desired bounds; thus Theorem~\ref{thm: offlinealg} implies the existence of an offline BST algorithm.
\end{proof}



\bibliography{../bibliography/strings,../bibliography/main}

\appendix

\end{document}